\newcommand{\pc}{color-constrained PC-tree\xspace}
\newcommand{\pcs}{color-constrained PC-trees\xspace}
\newcommand{\Pcs}{Color-Constrained PC-Trees\xspace}
\newcommand{\peplan}{{\normalfont\textsc{Par}\-\textsc{tial}\-\textsc{ly} \textsc{Em}\-\textsc{bed}\-\textsc{ded} \textsc{Pla}\-\textsc{nari}\-\textsc{ty}}\xspace}
\newcommand{\Emb}{\ensuremath{\Omega}}
\newcommand{\Ord}{\ensuremath{\omega}}
\NewDocumentCommand \merge    {mmO{\ell}} {\ensuremath{{#1} \otimes_{#3} {#2}}}
\NewDocumentCommand \update   {mm} {\ensuremath{{#1} + {#2}}}
\NewDocumentCommand \setto  {mmO{\ell}} {
  \ensuremath{{#1} [{#2} \to {#3}]}}
\NewDocumentCommand \usplit   {mmO{#2}O{\ell}} {
  \ensuremath{({#1} + {#2})[{#3} \to {#4}]}}
\newcommand{\Merge}{\texttt{Merge}\xspace}
\newcommand{\Update}{\texttt{Up\-date}\xspace}
\newcommand{\Split}{\texttt{Split}\xspace}
\newcommand{\Intersect}{\texttt{In\-ter\-sect}\xspace}
\newcommand{\Project}{\texttt{Pro\-ject}\xspace}
\newcommand{\TCj}[1][j]{\ensuremath{\mathcal T[C_{#1}]}}
\newcommand{\Sj}[1][j]{\ensuremath{S_{#1}}}
\newcommand\molap[2]{\mathrlap{\ensuremath{#1}}\phantom{\ensuremath{#2}}}
\newcommand\negphantom[1]{\settowidth{\dimen0}{#1}\hspace*{-\dimen0}}
\NewDocumentCommand\namedlabel{omm}{\begingroup \def\@currentlabel{#3}\phantomsection \IfNoValueTF{#1}
  {\label{#2}}
  {\label[#1]{#2}}\endgroup }
\NewDocumentEnvironment{prooflater}{m +b}{\expandafter\global\expandafter\def\csname#1\endcsname{\begin{proof}#2\end{proof}}}{\ignorespacesafterend}
\NewDocumentEnvironment{proofsketch}{O{Proof sketch.}}{\begin{proof}[#1]}{\end{proof}\ignorespacesafterend}
\newcommand{\restateref}[1]{[\iftoggle{myapx}{\hyperref[#1]{\AppendixSymbol}}{\hyperref[#1*]{\AppendixSymbol}}]}
\NewDocumentEnvironment{statelater}{m +b}{\expandafter\global\expandafter\def\csname#1\endcsname{#2}}{\ignorespacesafterend}
\let\oldrestatable\restatable
\def\restatable{\expandafter\oldrestatable}
\pretocmd{\thmt@rst@storecounters}{\Hy@SaveLastskip}{}{}
\apptocmd{\thmt@rst@storecounters}{\Hy@RestoreLastskip}{}{}
\crefname{property}{Property}{Properties}
\crefname{assumption}{Assumption}{Assumptions}
\title{A Simple Partially Embedded Planarity Test Based on Vertex-Addition}
\author{Simon D. Fink}{Algorithms and Complexity Group, Technische Universität Wien, Austria}{sfink@ac.tuwien.ac.at}{https://orcid.org/0000-0002-2754-1195}{}
\author{Ignaz Rutter}{Faculty of Informatics and Mathematics, University of Passau, Germany}{rutter@fim.uni-passau.de}{https://orcid.org/0000-0002-3794-4406}{}
\author{Sandhya {T. P.}}{Stockholm University, Department of Mathematics, Sweden}{thekkumpadan@math.su.se}{https://orcid.org/0000-0002-7745-3935}{}
\authorrunning{S.\,D. Fink and I. Rutter and Sandhya {T.\,P.}}
\keywords{Partially Embedded Planarity, PC-Trees, Constrained Planarity, Vertex-Addition, Linear-Time Algorithm}
\begin{document}

\maketitle
\thispagestyle{empty}

\section*{Abstract}
In the \peplan problem, we are given a graph $G$ together with a topological drawing of a subgraph $H$ of $G$.
The task is to decide whether the drawing can be extended to a drawing of the whole graph such that no two edges cross.
Angelini et al.\ gave a linear-time algorithm for solving this problem in 2010~\cite{abf-tpo-10,abf-tpo-15}.
While their paper constitutes a significant result, the algorithm described therein is highly complex:
it uses several layers of decompositions according to connectivity of both $G$ and $H$, its description spans more than 30 pages, and can hardly be considered implementable.
We give an independent linear-time algorithm that works along the well-known vertex-addition planarity test by Booth and Lueker~\cite{boo-pta-75,bl-tft-76}.
We modify the PC-tree as underlying data structure used for representing all planar drawing possibilities in a natural way to also respect the restrictions given by the prescribed drawing of the subgraph~$H$.
The testing algorithm and its proof of correctness only require small adaptations from the comparatively much simpler generic planarity test, of which several implementations exist.
If the test succeeds, an embedding can be constructed using the same approaches that are used for the generic planarity test.

\newpage
\setcounter{page}{1}

\section{Introduction}
\label{sec:intro}

In the \emph{partial representation extension problem}, the input consists of a graph $G$ and a representation~$\mathcal H$ of a subgraph~$H \subseteq G$.
The question is whether there exists a representation~$\mathcal G$ of $G$ whose restriction to~$H$ coincides with~$\mathcal H$.
The complexity of the problem strongly varies with the type of representation that is considered.
For planar straight-line drawings, the problem was shown to be NP-hard~\cite{pat-oea-06}, and in fact recently turned out to be $\exists\mathbb R$-complete~\cite{lmm-tco-18}.
On the other hand, in recent years a plethora of algorithmic and complexity results have been established for various classes of representations.
For example, Klav\'ik et al., who coined the term \emph{partial representation extension}, solved the problem for interval representations~\cite{kkv-epr-11} in quadratic time, which they later improved to linear \cite{kko-epr-16}.
Shortly afterwards, Angelini et al.~\cite{abf-tpo-10, abf-tpo-15} gave a linear-time algorithm for extending planar topological drawings.
Since then, the problem has been studied for a variety of different types of intersection representations, e.g., proper and unit interval graphs~\cite{kko-epr-17}, permutation graphs~\cite{kkkw-epr-12}, circle graphs~\cite{cfk-epr-19}, contact representations of geometric objects~\cite{cdk-cro-14}, trapezoid graphs~\cite{kw-epr-17} and rectangular duals~\cite{ckk-epr-21}.
In the context of drawings, the problem has also been studied for orthogonal drawings~\cite{ars-epo-21}.
In this work we focus on the case of planar topological drawings, for which Angelini et al.~\cite{abf-tpo-15} gave a linear-time algorithm.

\subparagraph*{Prior Work.}
In their paper, Angelini et al.~\cite{abf-tpo-15} first give a combinatorial characterization for yes-instances of \peplan.
They show that it is necessary and sufficient for a yes-instance to respect both the cyclic edge orders around vertices and the relative positions of different connected components defined by $\mathcal H$.
In particular, for a biconnected graph~$G$, these ``compatibility constraints'' set out by $\mathcal H$ can be individually verified on the nodes of the so-called SPQR-tree of~$G$ \cite{pat-pta-13,bt-olp-96}, which models the triconnected components as well as all planar embeddings of~$G$.
This procedure can also be used to individually test each biconnected component of a connected graph.
However, it is also necessary to verify certain compatibility constraints between the different blocks.
Similarly, the authors show that in the disconnected case, \peplan can be solved by testing each connected component and verifying the compatibility of the relative positions for the different components.
This characterization leads to a polynomial-time algorithm by progressively decomposing the input graph into its connected, biconnected, and triconnected
components, while also verifying the compatibility constraints at each step.

In order to improve the running time to linear, Angelini et al. first give a dynamic programming algorithm that solves \peplan for biconnected graphs.
This algorithm uses complex subprocedures that handle more restricted subcases.
For the connected and disconnected cases, they then show that the additional compatibility constraints can also be tested in linear time.
While their work constitutes a significant result, the algorithm described therein is highly complex, its description spans more than 30 pages.
Even when the graph $H$ that comes with a fixed drawing is connected; i.e., the embedding is uniquely determined by the rotation system of~$G$, it uses a decomposition of $G$ first into its biconnected, and then into its triconnected components.
When $H$ is not connected, these algorithms are applied for each face of $H$.
The resulting algorithm is thus highly technical and relies on a large number of non-trivial subprocedures and data structures.
It is therefore not surprising that no implementation is available to date.

\subparagraph*{Testing Planarity via Vertex Addition.}
In this paper, we propose an alternative solution for the problem that, unlike the work of Angelini et al., does not rely on dynamic programming on graph decompositions such as the SPQR-tree.
Instead, we straightforwardly extend the well-known linear-time vertex-addition planarity test of Booth and Lueker~\cite{boo-pta-75,bl-tft-76}.
This test incrementally inserts the vertices of a biconnected graph in an order that ensures that both the already inserted vertices as well as the not-yet inserted ones, respectively, induce a connected subgraph.
At every step, all edges from inserted to not-yet inserted vertices thus have to lie on the outer face of the already inserted subgraph.
One only cares about the possible cyclic orders of such edges along the outer face, disregarding the different planar embeddings that yield these orders.
To insert the next vertex, all edges connecting it to already-inserted vertices need to be consecutive on the outer face, as otherwise further edges to not-yet inserted vertices would be enclosed.
The algorithm uses a PC-tree\footnote{A simplified, cyclic version of the PQ-trees originally used by Booth and Lueker~\cite{boo-pta-75,bl-tft-76}, which only represent linear orders.} to succinctly represent the set of possible edge orders on the outer face.
The PC-tree is a data structure that represents sets of cyclic orders of its leaves, using two different types of inner nodes to model the possibilities for rearranging its leaves.
Its main operation is to restrict the set of represented orders to only those orders that have a certain subset of leaves cyclically consecutive (i.e., uninterrupted by leaves from the complement subset).
After making a set of edges to already-inserted vertices consecutive, they are removed from the represented orders and replaced by the set of the remaining edges of the inserted vertex (i.e., those to not-yet-inserted vertices).
The algorithm then proceeds to the next not-yet-inserted vertex.

\subparagraph*{Augmented PC-trees.}
The literature contains several examples where this ordinary planarity test is transferred to new settings by adapting its PC-trees to represent further constraints.
One example is the \textsc{Level Planarity} problem~\cite{dbn-hap-88}, where vertices have to be drawn on predefined horizontal lines.
Leipert et al.~\cite{jlm-lpt-98} annotate PC-trees with information on the height of components and the space available within faces to test this problem; see also \cite[Chapter 5]{brue-pvf-21}.
Similarly, in the \textsc{Constrained Level Planarity} problem~\cite{br-pac-17} where vertices have to satisfy some predefined ordering constraints, these constraints can also be carried over into the PC-tree; see \cite[Figure 5]{br-pac-17}.
Another example is \textsc{Simultaneous Planarity}~\cite{hjl-tsp-13}, where we seek planar drawings of two graphs that share some vertices and edges such that any shared vertex or edge is represented by the same point or curve, respectively.
Haeupler et al. show that for restricted variants of this problem, the standard planarity test can be augmented to process both graphs at the same time while using synchronization constraints on PC-tree nodes to ensure that a simultaneous embedding is obtained~\cite{hjl-tsp-13}.

\begin{algorithm}[tp]
  $v_1, \ldots, v_n\leftarrow\texttt{st-Order}(G)$\;
  $T_1\leftarrow$ \textcolor{orange}{constrained} P-node with $\deg(v_1)$ leaves $E(v_1)$ \textcolor{orange}{copying constraints of $v_1$}\;
  \For{$i$ in $1, \ldots, n-2\textcolor{orange}{, n-1}$}{
    $\ell\leftarrow$ new leaf;\qquad
    $F\leftarrow$ edges between $v_{i+1}$ and $\{v_1,\ldots,v_i\}$\;
    $S'\leftarrow$ \textcolor{orange}{constrained} P-node with leaves $(E(v_{i+1})\setminus F)\cup\{\ell\}$ \textcolor{orange}{copying constr.\ of $v_{i+1}$}\;
    $T'\leftarrow T_i$ updated to have $F$ consecutive and with the subtree of $F$ replaced by $\ell$\;
    $T_{i+1}\leftarrow S'$ merged with $T'$ at $\ell$\;
    \color{orange}
    \uIf{empty intersection between rotations of $v_{i+1}$ and admissible orders of $T'$\label{line:intersect}}{
      \Return \texttt{false};}
    \ElseIf{constraints of $v_{i+1}$ fix flip of C-node $\mu$ in $T'$}{
      fix flip of respective C-node $\mu'$ in $T_{i+1}$;\label{line:fix-flip}}
  }
  \Return \texttt{true} if no PC-tree update returned the null tree, and \texttt{false} otherwise\;
  \caption{
    High-level algorithm for testing a biconnected graph $G$ for \peplan. Our additions to the plain planarity test are shown in \textcolor{orange}{orange}.
    See \Cref{sec:peplan-bicon} for the full exposition of both algorithms.
  }
  \label{alg:hl-bicon-plan}
\end{algorithm}

\subparagraph*{Our approach.}
Similarly, the core of our approach for testing \peplan is a modification of the PC-tree as underlying data structure of the ordinary planarity test.
Our modification allows it to additionally handle the constraints that stem from the partial drawing.
In addition to the changes of the underlying data structure, we only need to make very small adaptations in the planarity testing algorithm itself; the high-level comparsion in \Cref{alg:hl-bicon-plan} nicely highlights this.
As our main contribution, we obtain a strongly simplified algorithm that relies on one depth-first search together with a single non-trivial data structure\footnote{In contrast, the algorithm by Angelini et al.\ uses ``several auxiliary data structures, namely block-cutvertex trees, SPQR-trees, enriched block-cutvertex trees, block-face trees, component-face trees, and vertex-face incidence graphs''~\cite{abf-tpo-15} and requires dynamic programming on these data structures to obtain a solution in linear time.}.
To further underline the practical advantages of our algorithm, we give concise, implementation-level pseudo-code for it.
The description of our core algorithm is less than half as long as the description of the algorithm by Angelini et al.\ \cite{abf-tpo-15}, while at the same time being far less technically involved.
\iftoggle{long}{
  The algorithmic description is complemented by an detailed proof of its correctness.
As this is not strictly necessary for understanding or implementing our algorithm, we first present the algorithm itself in \Cref{sec:peplan} and defer formal proofs to after the summary in \Cref{sec:summary} for interested readers.
}{
  Detailed proofs of correctness complementing our algorithmic description can be found in the full version \cite{}.
  As those are not strictly necessary for understanding or implementing our algorithm, we focus on the algorithm itself in this short version.
}

\section{Preliminaries}
\label{sec:preliminaries}
\newcommand{\mypar}[1]{\subparagraph*{#1}}

\mypar{Circular Orderings.}
Let $X$ be a ground set.
Two linear orders~$\alpha$, $\beta$ of $X$ are \emph{equivalent}, denoted by~$\alpha \sim \beta$, if there exist linear orders~$\alpha_1,\alpha_2$ such that~$\alpha=\alpha_1 \alpha_2$ and~$\beta = \alpha_2\alpha_1$.
That is, two orders are equivalent if one is a rotation of the other.
A \emph{cyclic order}~$\sigma$ is an equivalence class of~$\sim$.
Given a linear order~$\alpha$, we write~$[\alpha]:= \{ \beta \mid \alpha \sim \beta\}$ for the corresponding cyclic order.
For an order~$\sigma$, we use $\overline{\sigma}$ to denote its reversal.

Let~$A$ be a subset of $X$ with $\emptyset \neq A \subseteq X$, let~$\alpha$ be a linear order of $X$ and let~$\sigma$ be a cyclic order of $X$.
The set~$A$ is \emph{consecutive} in~$\alpha$ if $\alpha = \alpha_1\alpha_2\alpha_3$, such that~$\alpha_2$ is a linear order of~$A$ and~$\alpha_1\alpha_3$ is a linear order of the elements in~$X \setminus A$.
The set~$A$ is \emph{consecutive} in a cyclic order~$\sigma$ if there exists a linear order~$\beta \in \sigma$ such that~$A$ is consecutive in~$\beta$.
If the superset $X$ of set $A$ is clear from the context, we write $A^c$ for $X\setminus A$.
Now let~$\sigma$ be a cyclic order such that~$A$ is consecutive in $\sigma$ and let~$a \notin X$.
We denote by~$\sigma[A]$ the cyclic order of~$A$ that is obtained from~$\sigma$ by removing the elements of~$A^c$.
We denote by~$\sigma[A \to a]$ the cyclic order of~$A^c \cup \{a\}$ obtained from~$\sigma$ by replacing the elements of~$A$ with the single element $a$.

Let~$\sigma,\tau$ be cyclic orders of $X_1,X_2$ with~$X_1 \cap X_2 = \{\ell\}$.
The \emph{merge} of~$\sigma$ and~$\tau$, denoted by~$\sigma \otimes_\ell \tau$ is the cyclic order that is obtained by merging the two orders at~$\ell$.
More precisely, let~$\sigma = [\alpha\ell]$ and~$\tau = [\ell\beta]$, then~$\sigma \otimes_\ell \tau = [\alpha\beta]$.
Note that this yields $\sigma = \setto{(\merge{\sigma}{\tau})}{X_2}$.
If one side only contains the single element~$\ell$, the merge will effectively remove $\ell$ from the other side, i.e., $[\alpha\ell] \otimes_\ell [\ell] = [\alpha]$.

\mypar{Planar Drawings and Embeddings.}
A drawing of a graph $G=(V,E)$ maps each vertex to a distinct point in~$\mathbb R^2$ and each edge to a Jordan arc in $\mathbb R^2$ that connects its two endpoints in such a way that the points of vertices are only contained in arcs for which they are an endpoint.
A drawing is \emph{planar} if no two edges share an interior point.
A planar drawing~$\Gamma$ partitions the remainder of~$\mathbb R^2$ into \emph{faces}; the connected components of~$\mathbb R^2 \setminus \Gamma$.
The single unbounded face is called the \emph{outer face}.
Two planar drawings~$\Gamma_1,\Gamma_2$ are \emph{equivalent} if there exists an ambient isotopy that transforms~$\Gamma_1$ into~$\Gamma_2$, i.e., there exists a continuous map~$F \colon \mathbb R^2 \times [0,1] \to \mathbb R^2$ where each of the maps~$F_t(x) := F(x,t)$ is a homeomorphism of~$\mathbb R^2$ such that~$F_0$ is the identity and~$F_1$ maps~$\Gamma_1$ to~$\Gamma_2$.
An equivalence class of planar drawings is called an \emph{embedding}.
For a connected graph~$G$ an embedding~$\mathcal E$ is described by a \emph{rotation system}, i.e., the cyclic order of the edges $E(v)$ around each vertex $v$, together with a designated outer face.
If~$G$ is not connected, a rotation system only describes the embedding of each connected component and we need additional information about their \emph{relative position}, i.e., within which face each component is embedded.
This can be the outer face or a face of another connected component; in the latter case the one component is nested within the other.
We sometimes consider drawings or embeddings on a disk instead of the unbounded plane where some part of the graph has to lie on the boundary of this disk.
We will highlight this by using $\Gamma^\oplus$ and $\mathcal{E}^\oplus$ instead of $\Gamma$ and $\mathcal{E}$, respectively.

\mypar{PC-trees.}\label{sec:pc-trees}
PC-trees, introduced by Shih, Hsu and McConnell\cite{sh-anp-99,hm-pta-03,hm-ptp-04}, represent cyclic orders of a base set $X$.
A PC-tree $T$ is an unrooted tree with leaves $X$ and inner nodes of degree at least 3, each of which is either a \emph{P-node} or a \emph{C-node}.
See \Cref{fig:pc-ops} for examples,
where P-nodes are denoted by small disks and C-nodes by larger double circles.
While the edges incident to a P-node can be rearranged without any restriction, the edges incident to a C-node come with a cyclic order that is fixed up to reversal.
Any embedding of a PC-tree $T$ that respects this constraint induces a cyclic order of its leaves that we call an \emph{admissible order} of $T$.
The set of all admissible orders of $T$ is denoted by $\Ord(T)$.
We define the \emph{null-tree} to be the PC-tree with $\Ord(T)=\emptyset$.
We refer to the set of all leaves of $T$ as $L(T)=X$.

Let~$\emptyset \ne A \subsetneq L(T)$.
We now want to classify when $A$ is consecutive in every admissible order of $T$, that is whether $T$ allows for any cyclic order that has $A$ non-consecutive.
An edge~$e$ of $T$ is \emph{consistent} with $A$ if one of the two subtrees obtained by removing $e$ contains only leaves from~$A$.
We denote by $A(e) \subseteq A$ the leaves of this subtree.
For two consistent edges $e,e'$ of $T$, we define~$e \prec e'$ if $A(e) \subseteq A(e')$.
This gives a partial order on the consistent edges of $A$.
We denote by $E(T,A)$ the set that contains all maximal elements of this partial order; see~\Cref{fig:pc-eta}.
We call $A$ \emph{consecutive} (with respect to $T$) if $E(T,A)$ either consists of a single edge or is a consecutive set of edges around a $C$-node.
Observe that $A$ is consecutive if an only if $A$ is consecutive in every order~$\sigma \in \Ord(T)$.

\begin{figure}[t]
  \centering
  \begin{subfigure}{\linewidth}
    \includegraphics[width=\linewidth]{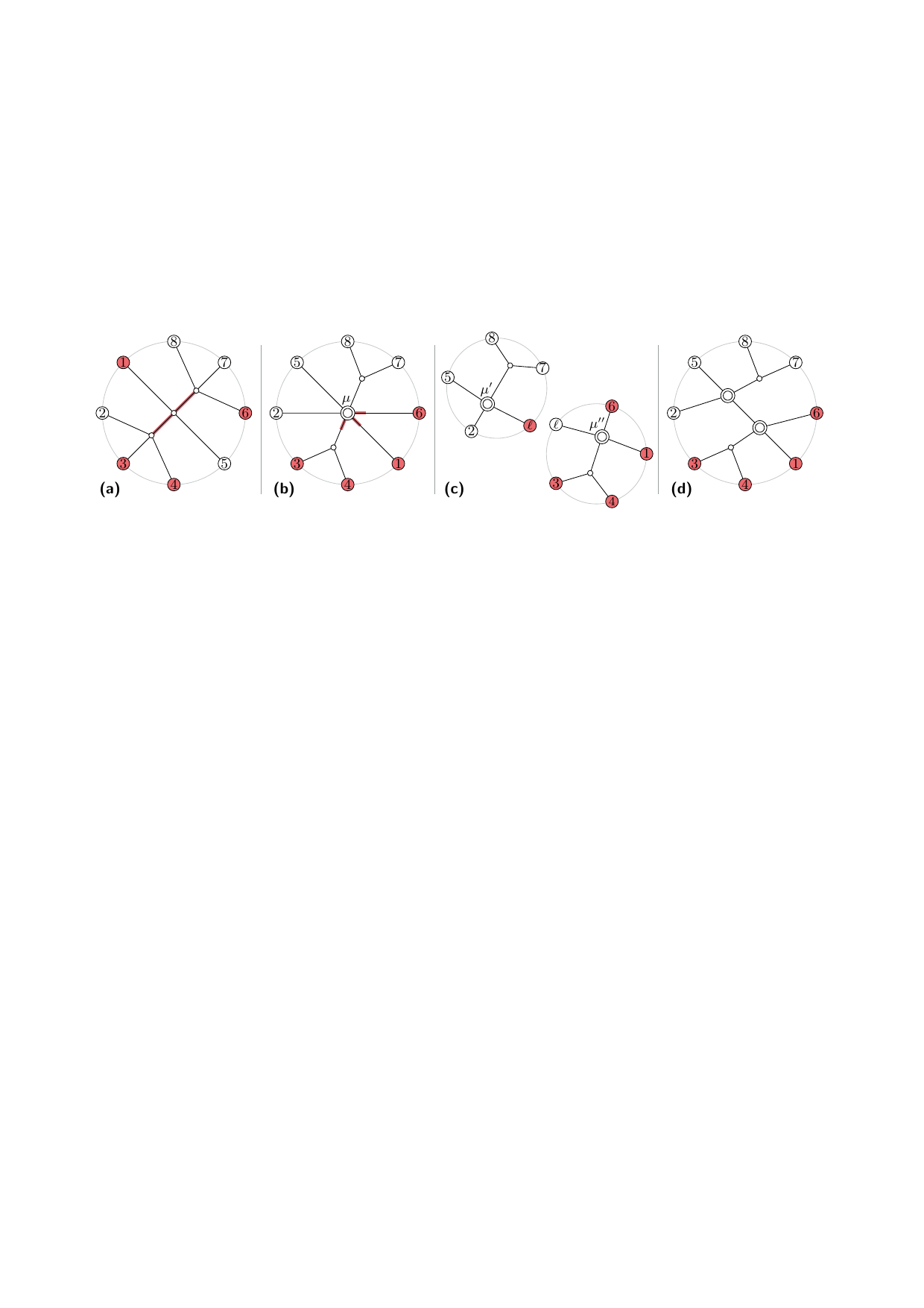}
    \phantomsubcaption\label{fig:pc-init}
    \phantomsubcaption\label{fig:pc-update}\label{fig:pc-eta}
    \phantomsubcaption\label{fig:pc-split}
    \phantomsubcaption\label{fig:pc-merge}
    \vspace{-.7cm}
  \end{subfigure}
  \caption{
    \textbf{(a)} A PC-tree $T$ on the set $L(T)=\{1,\ldots,8\}$ with only P-nodes as inner nodes. The red leaves belong to a set $A$ and their terminal path (described in \iftoggle{long}{\Cref{lem:ccpc-update,sec:pc-tree-details}}{\Cref{lem:ccpc-update}}) is highlighted in red.
    \textbf{(b)} The PC-tree $\update{T}{A}$ ensuring that the edges in $A$ are consecutive. Here, $E(T,A)$ consists of the three edges marked in red incident to C-node $\mu$.
    \textbf{(c)} The PC-trees $T'=\usplit{T}{A}$ (left) and $T''=\usplit{T}{A}[A^c]$ (right) showing the split of the previous tree and how $\mu$ is split into two parts $\mu'$ and $\mu''$.
    \textbf{(d)} The PC-tree $\merge{T'}{T''}$ showing the merge of the previous two trees.
  }
  \label{fig:pc-ops}
\end{figure}

\begin{statelater}{pcOps}
\label{sec:pc-tree-details}
We now describe the basic operations of PC-trees, which we already summarized in the preliminaries in \Cref{sec:preliminaries}, in more detail.

\begin{description}
\item[\Merge]
  Let~$T_1,T_2$ be two PC-trees whose respective leaf sets have size at least 2 and that share exactly one leaf~$\ell$; see \Cref{fig:pc-split}.
  The {\Merge} of $T_1,T_2$, denoted as $\merge{T_1}{T_2}$, is the PC-tree $T$ obtained by identifying the two copies of~$\ell$ in~$T_1$ and~$T_2$
  and smoothing the resulting degree-2 node~$\ell$ into an edge~$xy$, where $x,y$ are the two neighbors of~$\ell$ in~$T_1$ and~$T_2$, respectively; see \Cref{fig:pc-merge}.
Formally, we have $\Ord(T) = \{ \sigma_1 \otimes_{\ell} \sigma_2 \mid \sigma_i \in \Ord(T_i)$ for $i\in\{1,2\}\}$.
  The orders $\sigma_1 \in \Ord(T_1)$ and $\sigma_2 \in \Ord(T_2)$ corresponding to a $\sigma\in\Ord(T)$ can be obtained by undoing the merge that created $T$ from $T_1$ and $T_2$ while maintaining the embedding of $T$ that corresponds to $\sigma$.
  Observe that the leaves of each input tree are consecutive in the tree resulting from the merge, i.e., $L(T_i)$ is consecutive with respect to $T$ for $i=1,2$.
  Furthermore, any PC-tree can be obtained by merging trees with a single inner node.
We also extend the definition of \Merge to the case where one tree, say $T_2$, consists only of a single leaf $\ell$.
  In this case, analogously to cyclic orders, we simply remove $\ell$ from $T_1$.

\item[\Split]
  Let~$T$ be a PC-tree and let set~$A$ with $\emptyset \neq A \subsetneq L(T)$ be consecutive with respect to~$T$.
  The operation {\Split} separates $T$ into two new PC-trees $T'$ and~$T''$ representing the admissible orders of $A^c = L(T)\setminus A$ and $A$, respectively, in~$T$; see also \Cref{fig:pc-split}.
  The two trees have leaves $L(T')=A^c\cup\{a\}$ and $L(T'')=A\cup\{a\}$, where $a\notin L(T)$ is a new leaf that represents the position of the split-off subtree in each of the resulting halves.
  The PC-tree~$T'$ is obtained by replacing the edges in $E(T,A)$ by the single new leaf~$a$ and removing the subtrees containing the leaves in $A$.
  Symmetrically, the PC-tree $T''$ is obtained by replacing the edges in $E(T,A^c)$ with $a$ and removing the subtrees containing the leaves in $A^c$.
  This yields trees $T', T''$ with
  $\Ord(T') = \{ \sigma[A \to a] \mid \sigma \in \Ord(T)\}$ and
  $\Ord(T'') = \{ \sigma[A^c \to a] \mid \sigma \in \Ord(T)\}$.
  To refer to one of the resulting trees, we will borrow this notation and write $\setto{T}{A}[a]$ and $\setto{T}{A^c}[a]$ for the trees $T'$ and $T''$, respectively.

\item[\Update] Let~$T$ be a PC-tree and let $A \subseteq L(T)$ be a set of leaves.
  The operation {\Update}, denoted as $\update{T}{A}$, produces a new PC-tree $T'$ with~$\Ord(T') = \{ \sigma \in \Ord(T) \mid A$ is consecutive in $\sigma\}$.
  We also call the set $A$ a \emph{restriction} (of the admissible orders of $T$ to those where $A$ is consecutive).
  The procedure has the property that the leaf set $A$ is consecutive with respect to the resulting tree $T'$.
  We call a restriction \emph{impossible} if there is no admissible order of $L$ where the leaves in $R$ are consecutive, i.e., $\update{T}{A}$ is the null-tree.
  Note that leaf sets with $|A|\in\{0,1,|L(T)|-1, |L(T)|\}$ are always consecutive and thus do not require changes to $T$.
  Otherwise, the required changes are made by the following steps initially described by Hsu and McConnell \cite{hm-pta-03,hm-ptp-04}.
  We use the step numbering from \cite{fpr-eco-21} to stay consistent with the pseudo-code shown in \Cref{alg:rcpc-update}, but combine the first and last two steps respectively as their distinction is not relevant for this work.

  \begin{figure}
    \centering
    \begin{subfigure}[]{0.3\textwidth}
      \centering
      \includegraphics[page=1]{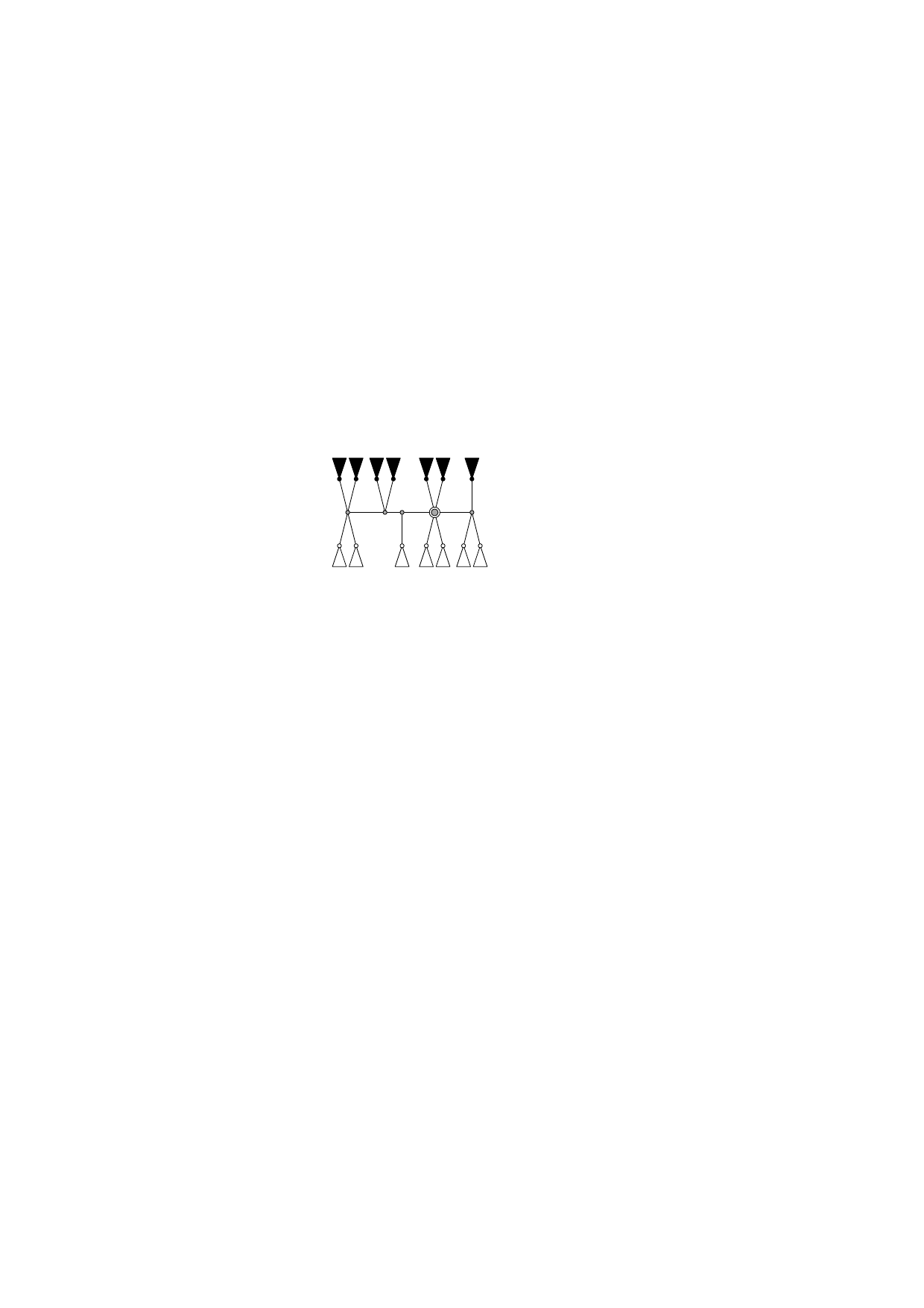}
      \caption{After step \ref{pc-upd-reorder}.}
      \label{fig:pc-upd-A}
    \end{subfigure}
    \hfill
    \begin{subfigure}[]{0.3\textwidth}
      \centering
      \includegraphics[page=2]{fig/update_step}
      \caption{After step \ref{pc-upd-split}.}
      \label{fig:pc-upd-B}
    \end{subfigure}
    \hfill
    \begin{subfigure}[]{0.3\textwidth}
      \centering
      \includegraphics[page=3]{fig/update_step}
      \caption{Final PC-tree.}
      \label{fig:pc-upd-C}
    \end{subfigure}
    \caption{Visualization of the updates to the terminal path made to ensure a set of leaves is consecutive. The full subtrees with only leaves that should be made consecutive are shown in black, empty subtrees are shown in white. The terminal path is the horizontal line with gray nodes.}
    \label{fig:pc-upd-steps}
  \end{figure}

  \begin{description}
  \item[\negphantom{1\&}1\&2] \namedlabel{pc-upd-labeling}{1\&2}
    Determine the edges of $T$ that are consistent with neither~$A$ nor~$L(T) \setminus A$; see \Cref{fig:pc-init,fig:pc-upd-steps}.
    If these edges do not form a path, the so-called \emph{terminal path}, then $T$ does not represent any cyclic order where~$A$ is consecutive, and we return the null-tree.

  \item[3] \namedlabel{pc-upd-reorder}{3}
    Reorder the edges around the nodes on the terminal path so that
    all subtrees that have all their leaves in $A$, which we will call \emph{full}, lie on one side and
    all subtrees with leaves in $L(T) \setminus A$, which we will call \emph{empty}, lie on the other side; see \Cref{fig:pc-upd-A}.
    If this is not possible, then this is due to a C-node around which edges consistent with~$A$ and edges consistent with $L(T) \setminus A$ alternate.
    It follows that $T$ does not represent a cyclic order where $A$ is consecutive, and we return the null-tree.

  \item[4] \namedlabel{pc-upd-split}{4}
    Split each P-node~$\mu$ on the terminal path twice, once to move all edges to full subtrees adjacent to $\mu$ to a new P-node~$\mu^F$
    and a second time to move all edges to empty subtrees to a new P-node~$\mu^E$.
    Add edges to the new nodes, making the remainder of $\mu$ adjacent to $\mu^F,\mu^E,$ and up to two edges of the terminal path; see \Cref{fig:pc-upd-B}.
    Convert this remaining part into a C-node $\mu^M$ and choose its embedding such that the two terminal edges are not adjacent to each other if it has degree 4,
    and flip it so that all full subtrees again lie on the same side of the terminal path.

  \item[\negphantom{5\&}5\&6] \namedlabel{pc-upd-contract}{5\&6}
    Contract all nodes of the terminal path (which are now all C-nodes) into a single, central C-node.
    Finally, smooth degree-2 vertices and remove degree-1 vertices that are not leaves of the original tree $T$; see \Cref{fig:pc-upd-C}.
  \end{description}

  Hsu and McConnell~\cite{hm-pta-03, hm-ptp-04} show that each of these steps can be implemented to run in time proportional to $|A|$ plus the length of the terminal path.
  This leads to amortized linear time in the size of the set $A$ as all terminal edges disappear through the update.

\item[\Intersect] Let~$T_1,T_2$ be two PC-trees with the same set of leaves.
  Then the operation {\Intersect} produces a new PC-tree $T$ with~$\Ord(T) = \Ord(T_1) \cap \Ord(T_2)$.
  Booth~\cite{boo-pta-75} describes a linear-time algorithm for computing the intersection of two PQ-trees; the same algorithm can be applied for PC-trees \cite{pfr-alt-20}.
  The general idea is to convert $T_1$ into a set of consecutivity constraints and to then update $T_2$ with these sets so that it represents only the orders that are represented by both trees.
  The key to achieve linear running time is to contract maximal subtrees that are already consecutive in both trees into single nodes.
\end{description}
\end{statelater}

\enlargethispage{1em}
  We will need the following well-known basic operations of PC-trees; see also \Cref{fig:pc-ops}.
  We only shortly define them here and give more in-depth explanations in \iftoggle{long}{\Cref{sec:pc-tree-details}}{the full version \cite{}}.
  We also point the interested reader to \cite{fpr-eco-21}, where PC-trees are described extensively in terms of pseudo-code and are also evaluated via a practical, open-source C++ implementation.

  \begin{description}
  \item[\Merge]
  Let~$T_1,T_2$ be two PC-trees with $|L(T_1)|, |L(T_2)|\geq2$ and $|L(T_1)| \cap |L(T_2)| = \ell$; see \Cref{fig:pc-split}.
  The {\Merge} of $T_1,T_2$, denoted as $\merge{T_1}{T_2}$, is the PC-tree $T$ with $\Ord(T) = \{ \sigma_1 \otimes_{\ell} \sigma_2 \mid \sigma_i \in \Ord(T_i)$ for $i\in\{1,2\}\}$
  obtained by identifying $\ell$ in $T_1, T_2$; see \Cref{fig:pc-merge}.

  \item[\Split]
  Let~$T$ be a PC-tree and let set~$A$ with $\emptyset \neq A \subsetneq L(T)$ be consecutive with respect to~$T$.
  The operation {\Split} separates $T$ into two new PC-trees $T'$ and~$T''$, with
  $\Ord(T') = \{ \sigma[A \to \ell] \mid \sigma \in \Ord(T)\}$ and
  $\Ord(T'') = \{ \sigma[A^c \to \ell] \mid \sigma \in \Ord(T)\}$, where $\ell\notin L(T)$ is a new leaf.
  Note that this may split a C-node $\mu\in T$ into two halves $\mu'\in T',\mu''\in T''$ as shown in \Cref{fig:pc-split}.
  We also write $\setto{T}{A}[\ell]$ for $T'$ and $\setto{T}{A^c}[\ell]$ for $T''$.

  \item[\Update]
  Let~$T$ be a PC-tree and let \emph{restriction} $A \subseteq L(T)$ be a set of leaves.
  The operation {\Update}, denoted as $\update{T}{A}$, produces a new PC-tree $T'$ with~$\Ord(T') = \{ \sigma \in \Ord(T) \mid A$ is consecutive in $\sigma\}$ in time linear in $|A|$ \cite{hm-pta-03,hm-ptp-04,fpr-eco-21}; see \Cref{fig:pc-update}.
  We call a restriction \emph{impossible} if there is no admissible order of $L$ where the leaves in $R$ are consecutive, i.e., $\update{T}{A}$ is the null-tree.

  \item[\Intersect]
  Let~$T_1,T_2$ be two PC-trees with the same leaf set, i.e., $L(T_1)=L(T_2)$.
  Operation {\Intersect} yields a new PC-tree $T$ with~$\Ord(T) = \Ord(T_1) \cap \Ord(T_2)$ in linear time~\cite{boo-pta-75,pfr-alt-20}.
  \end{description}

Note that, unlike for individual cyclic orders where splitting and merging are converse operations, the same does not always hold for PC-trees; see \Cref{fig:pc-merge} for an example.
Here, only certain orders $\sigma_1\in \Ord(T')$ and $\sigma_2\in \Ord(T'')$ can be merged to an admissible order of the original $T$.
We call such pair, that is where $\merge{\sigma_1}{\sigma_2}\in\Ord(T)$ holds, \emph{compatible}.
The following lemma shows that an incompatibility arises only from the two halves $\mu'\in T',\mu''\in T''$ a C-node $\mu\in T$ can be split into; see \Cref{fig:pc-split}.
See \iftoggle{long}{\Cref{sec:pc-tree-details}}{the full version \cite{}} for the full proof.

\begin{restatable}\restateref{lem:merge-splits}{lemma}{lemMergeSplits}
  \label{lem:merge-splits}
  Let set $A$ be consecutive in PC-tree $T$, let $T'=\setto{T}{A}$ and $T''=\setto{T}{A^c}$ and let $\sigma_1\in \Ord(T')$ and $\sigma_2\in \Ord(T'')$.
  If $E(T,A)$ is a single edge $e$, then $T=\merge{T'}{T''}$ and $\merge{\sigma_1}{\sigma_2}\in\Ord(T)$, that is any pair of $\sigma_1$ and $\sigma_2$ is compatible.
  Otherwise, $E(T,A)$ is a set of edges consecutive around a C-node $\mu$ of~$T$, and we have $T\neq\merge{T'}{T''}$ and $\Ord(T)\subsetneq\Ord(\merge{T'}{T''})$.
  In this case, $\sigma_1$ and $\sigma_2$ are compatible if and only if they induce the same flip of the split halves of $\mu$ in $T'$ and~$T''$.
  If $\sigma_1$ is not compatible with $\sigma_2$, it is instead compatible with $\overline{\sigma_2}$.
\end{restatable}
\begin{prooflater}{proofMergeSplits}
If $E(T,A)$ is a single edge $e$, tree $T$ is split by splitting $e$.
It can thus be reobtained by merging at $e$ again, that is $T=\merge{T'}{T''}$.
As the embeddings that~$\sigma_1$ and~$\sigma_2$ induce on~$T'$ and~$T''$, respectively, can also be joined at~$e$ to obtain an embedding of~$T$, we always have $\merge{\sigma_1}{\sigma_2}\in\Ord(T)$.

Otherwise, $E(T,A)$ is a set of edges consecutive around a C-node~$\mu$ of~$T$.
Splitting~$T$ into two trees $T', T''$ also splits~$\mu$ into two respective C-nodes $\mu',\mu''$, where~$\mu''$ is incident to the edges in $E(T,A)$ plus~$\ell$ and~$\mu'$ gets the remaining edges of~$\mu$ plus another copy of~$\ell$; see \Cref{fig:pc-merge}.
Merging $T'$ and $T''$ now does not yield $T$ again, as $\mu'$ and $\mu''$ are still separate C-nodes connected by the edge $\ell$ in $T^*=\merge{T'}{T''}$.
We have $\Ord(T^*)\supsetneq\Ord(T)$ as $\mu'$ and $\mu''$ can be flipped independently and thus in total allow four different orders for their incident edges in $T^*$, while $\mu$ in $T$ only allows two.
The embeddings $\sigma_1$ and $\sigma_2$ induce on $T'$ and $T''$ can thus only be merged to an embedding of $T$
if the rotations of $\mu'$ and $\mu''$ they induce can be merged to form an admissible rotation of $\mu$.
As the C-nodes have two admissible embeddings, either~$\sigma_1$ and $\sigma_2$ are compatible or $\sigma_1$ and $\overline{\sigma_2}$ are compatible.
\end{prooflater}

\section{Partially Embedded Planarity}\label{sec:partial}\label{sec:peplan}
\enlargethispage{1em}
Recall that for \peplan, we are given an instance $(G,H,\mathcal H)$ where $G$ is a graph with subgraph $H$ and $\mathcal H$ is an embedding of $H$.
We seek a planar embedding~$\mathcal G$ of $G$ whose restriction to~$H$ coincides with~$\mathcal H$.
In their solution, Angelini et al.~\cite[Lemma 3.9]{abf-tpo-15} ensure this condition by enforcing that
(i) around each vertex~$v$ of $H$ the cyclic order of the edges of $H$ is the same in~$\mathcal G$ and in~$\mathcal H$ and
(ii) for each (directed) facial cycle of~$\mathcal H$, the vertices of $H$ that are embedded left and right of it coincide in~$\mathcal G$ and in~$\mathcal H$.
It is the second condition, also referred to as having correct \emph{relative positions}, that is relatively complicated to handle efficiently.
We will show how to do this in the remainder of this section, which is organized as follows.
In \Cref{sec:cc-verts}, we show how the condition of having correct relative positions can be translated to color-coded constraints on the order of edges incident to individual vertices.
\Cref{sec:cc-pc} shows how such constraints can be represented using augmented PC-trees and how to update such PC-trees with new consecutivity restrictions.
In \Cref{sec:peplan-bicon}, we describe how substituting our augmented PC-trees in the Booth-Lueker planarity test for biconnected graphs can be used to test \peplan.
\Cref{sec:peplan-nonbic} lifts the biconnectivity requirement using the Haeupler-Tarjan generalization of the Booth-Lueker test to not-necessarily biconnected graphs.

\subsection{Color-Constrained Vertices}\label{sec:cc-verts}
Observe that, if $G$ is connected, the embedding of~$G$ and also the induced relative positioning of the connected components of $H$ can be expressed solely in terms of the rotation system of~$G$.
Angelini et al.~\cite[Theorem 4.14]{abf-tpo-15} give a straight-forward linear-time reduction from the case of a non-connected $G$ to solving the components of $G$ independently.
We can thus for now limit our attention to the case where~$G$ is connected.

\begin{figure}[t]
  \centering
  \includegraphics[page=3]{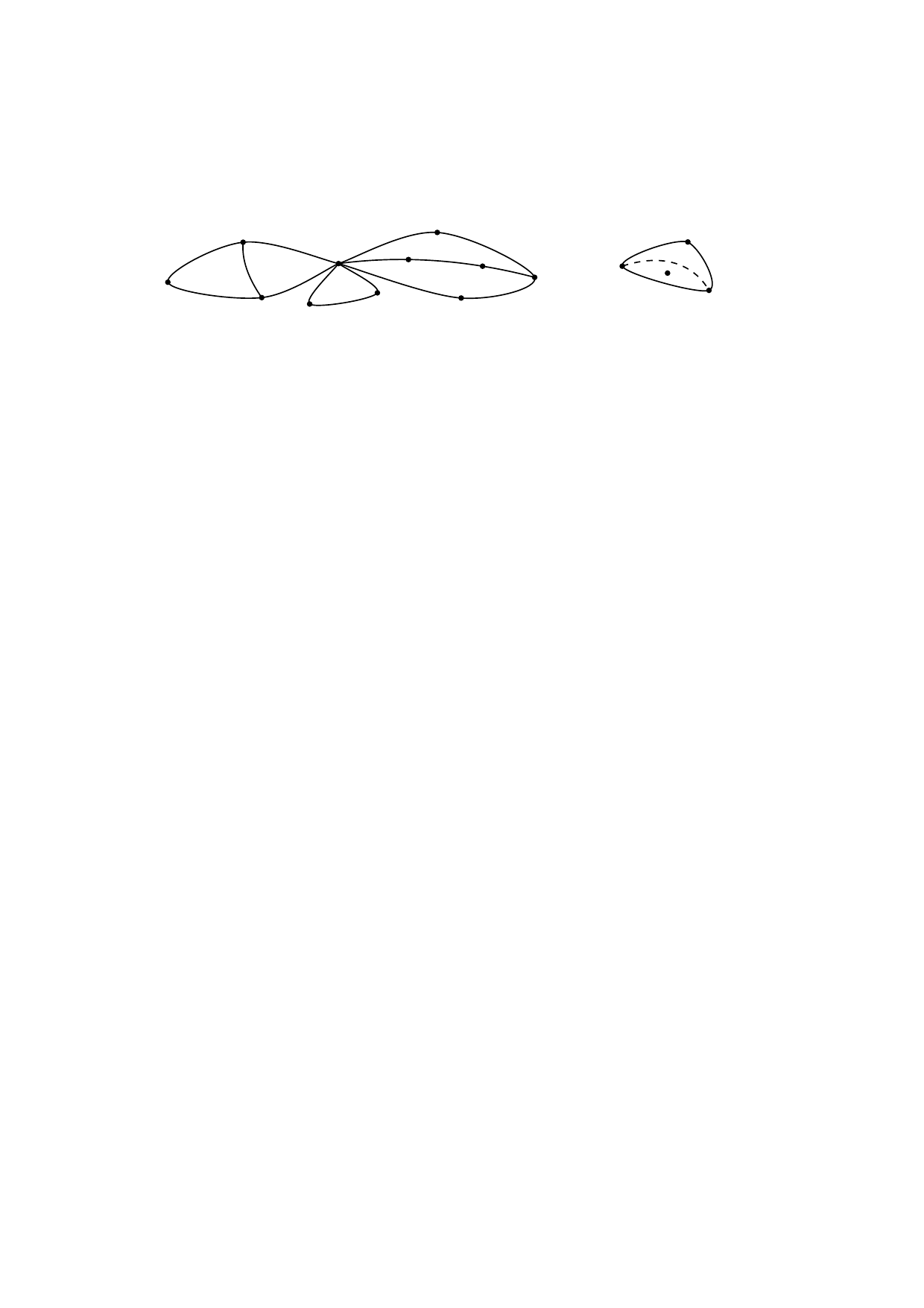}
  \caption{
    A positive instance of \peplan. The graph $H$ is shown with black edges.
    The graph $G-H$ is divided into 5 bridges $A,B,C,D$ and $E$.
    Bridges $B$ and $C$ have all their attachments in a single block and are thus not restricted to a face.
    Bridges $A, D,$ and $E$ have attachments in different blocks and are thus restricted to a face.
    Adding the dashed edge to $H$ as shown would turn the instance negative, as the attachments of $E$ would no longer share a face.
  }
  \label{fig:peplan-bridges}
\end{figure}

Now consider two distinct connected components $C_1, C_2$ of $H$ connected by a path $p$ whose inner vertices belong to $V(G)\setminus V(H)$.
The fact that~$p$ does not contain vertices of $H$ except for its endpoints already ensures that $C_1, C_2$ are embedded on the same side of any facial cycle of any other component of $\mathcal H$.
It remains to ensure that $C_2$ is embedded in the correct face of~$C_1$ and vice versa.
Thanks to the connectivity of $G$, ensuring this for each pair $C_1, C_2$ of components of $H$ connected by such a path $p$ is sufficient to ensure correct relative positions.
Note that not only $p$ has to be in a certain face of~$\mathcal H$, but this also applies to the whole connected component of $G - V(H)$ that contains $p$, which we call an $H$-bridge.
Formally, an \emph{$H$-bridge} $B$ is either a single edge $e \in E(G)\setminus E(H)$ with both end-vertices in $H$ or a connected component $B$ of $G - V(H)$; see \Cref{fig:peplan-bridges}.
The \emph{attachments} of a component $B$ are the vertices of $H$ whose removal disconnects $B$ from the remaining graph or, in the case of a single edge $B$, its endpoints.
Note that each $H$-bridge of $G$ has to lie in exactly one face of $\mathcal H$ as it contains no vertices of $H$.
If an $H$-bridge $B$ has attachments in at least two distinct blocks of $H$, then that face is uniquely determined; see \cite[Section 2.3]{abf-tpo-15} and \Cref{fig:peplan-bridges}.
Thus, we can color each edge $e\in E(G)\setminus E(H)$ with the unique face $f(e)$ of $\mathcal H$ in which $e$ must be embedded, or $e$ is uncolored as this face is arbitrary and we set $f(e)=\bot$.
Angelini et al.\ give a simple linear-time algorithm computing this coloring \cite[Lemma 2.2]{abf-tpo-15}.
To ensure correct relative positions, we now need to ensure that, at every vertex, all edges to incident $H$-bridges are embedded in the correct face of $\mathcal H$.
Furthermore, we need to respect the cyclic order of incident edges of $H$ as given in $\mathcal H$, where each angle between two consecutive edges corresponds to a face of $\mathcal H$.
We will express both requirements as constraints on vertex~rotations.

\enlargethispage{1em}
Let~$v$ be a vertex of~$G$.
A \emph{color-constraint}~$C_v=(F_v,R_v,U_v,\rho_v,f)$ for~$v$ partitions the edges incident to $v$ into a set~$F_v=E(v)\cap E(H)$ of \emph{fixed edges}, a set~$R_v=\{ e\in E(v)\setminus E(H) \mid f(e) \neq \bot \}$ of \emph{restricted edges}, and a set~$U_v=\{ e\in E(v)\setminus E(H) \mid f(e) = \bot \}$ of \emph{unrestricted edges}; see \Cref{fig:color-constraint}.
The fixed edges have a fixed counter-clockwise order~$\rho_v=\mathcal H(v)$, in which we want to insert the remaining edges to find a rotation for $v$.
We additionally constrain where the restricted edges can be inserted.
For each fixed edge $h$, $\mathcal H$ defines a face $f_v(h)$ which is incident to $h$ in counter-clockwise direction around $v$.
We call a pair of fixed edges~$(h_1,h_2)$ that are consecutive in~$\rho_v$ an \emph{angle}; see \Cref{fig:color-constraint}. A \emph{valid cyclic order}~$\sigma$ (i.e., one that \emph{satisfies} $C_v$) of the edges incident to~$v$ is obtained by arbitrarily inserting the restricted and unrestricted edges of~$v$ into~$\rho_v$ in such a way that, for each restricted edge~$e$ with (in counter-clockwise order) preceding fixed edge $h$, $f(e) = f_v(h)$ holds.
If $F_v=\emptyset$, any order of $R_v\cup U_v$ is valid.
\Cref{fig:color-constraint} shows an example color-constraint, where the restricted edge $e$ with label 5 and the fixed edge $h$ with label~4 have $f(e) = f_v(h)$, represented through purple color.
Observe that any planar embedding that satisfies all color-constraints satisfies both conditions~(i) and~(ii) from the beginning of this section, therefore yielding a solution to our problem.
Note that while we derive the color-constraint $C_v$ from $\mathcal H$, its definition does not rely on $\mathcal H$ as we can interpret $f$ as a coloring with arbitrary, opaque values.
\enlargethispage{1em}

\begin{figure}
  \begin{subfigure}{.3\textwidth}
    \centering
    \includegraphics[page=1]{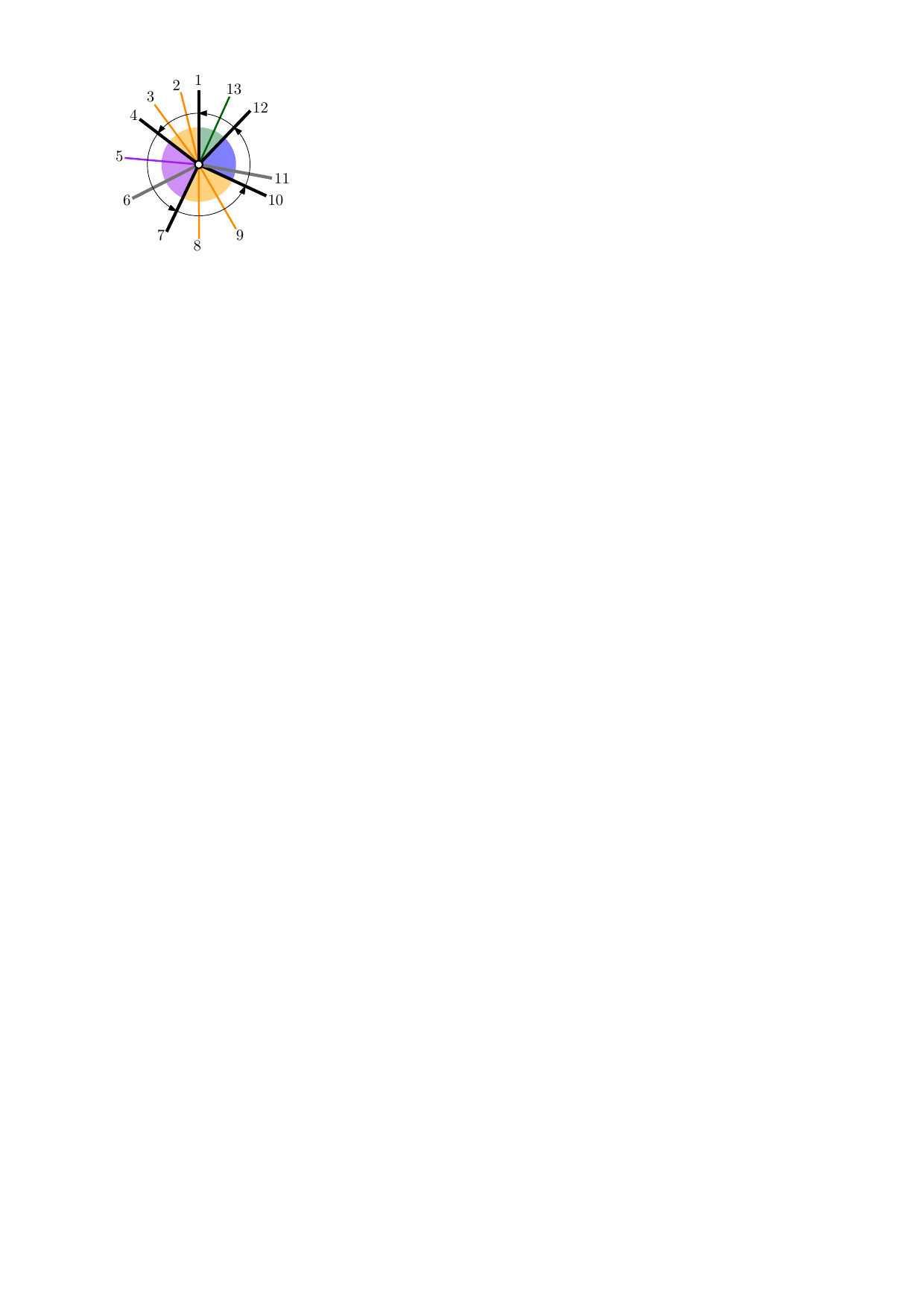}
    \vspace{-.7cm}\caption{\hspace*{\fill}}
  \end{subfigure}\hfill \begin{subfigure}{.3\textwidth}
    \centering
    \includegraphics[page=2]{fig/partial-pnode}
    \vspace{-.7cm}\caption{\hspace*{\fill}}
  \end{subfigure}\hfill \begin{subfigure}{.3\textwidth}
    \centering
    \includegraphics[page=3]{fig/partial-pnode}
    \vspace{-.7cm}\caption{\hspace*{\fill}}
  \end{subfigure}
  \caption{
    Example of a color-constraint.
    Fixed edges are fat and black, their fixed counter-clockwise order is indicated by arrows.
    The faces following fixed edges are indicated by colored angles.
    Each restricted edge is drawn using the respective color of its face, it may be inserted into an arbitrary angle of the same color.
    \textbf{(a)} and \textbf{(b)} show two different valid cyclic orders orders around~$v$.
    The order in \textbf{(c)} is not valid, since the orange edge $9$ is in the blue angle between $10$ and $12$.
  }
  \label{fig:color-constraint}
\end{figure}

\subsection{\Pcs}\label{sec:cc-pc}
As we want to respect the color-constraints of every graph vertex in our vertex-addition planarity test, we now introduce a variant of the PC-tree that can also encode these color-constraints as described in the previous section.
A \emph{\pc} has three different types of nodes:
\begin{itemize}
\item C-nodes, which behave exactly as in the case of normal PC-tree,
\item \emph{fixed C-nodes}, for which the order of their incident edges is completely fixed and may not even be reversed, and finally
\item \emph{color-constrained P-nodes}, which are P-nodes with a color-constraint as defined for graph vertices above.
\end{itemize}
\enlargethispage{1em}
When the context is clear, we refer to the latter simply as P-nodes.
Note that an ordinary P-node~$\mu$ is a special case of a color-constrained P-node with a color-constraint where~$F_\mu = R_\mu = \emptyset$.
Similarly, a color-constrained P-node with a color-constraint where $R_\mu = U_\mu = \emptyset$ is equivalent to a fixed~C-node.
As with usual PC-trees, choosing a valid cyclic order of the edges incident to each inner node of a \pc~$T$ determines a cyclic order of its leaf set~$L(T)$.
Therefore, $T$ also represents a set~$\Ord(T)$ of cyclic order of its leaves.

\begin{algorithm}[t]
  \tcp{Step 1, Labelling}
  \textsc{LabelNodes}(R)\tcp*[l]{unchanged \cite[Algorithm 1]{fpr-eco-21}}
  \tcp{Step 2, Terminal Path Enumeration}
  \textsc{EnumerateTP}()\tcp*[l]{unchanged \cite[Algorithm 3]{fpr-eco-21}}
  \vspace{.2cm}

  \tcp{Step 3, Node Flips}
  \textsc{CheckFlips}()\tcp*[l]{\textcolor{orange}{modifications described in \iftoggle{long}{\Cref{alg:rcpc-flips}}{\Cref{lem:ccpc-update}}}}
  \vspace{.2cm}

  \tcp{Step 4, Node Splitting}
  \For{P-node $\mu$ on the terminal path}{
    $F\leftarrow$ incident edges of $\mu$ leading to full subtrees\;
    $\mu',\,\mu^F  \leftarrow\textsc{Split}(\mu, F)$\tcp*[l]{\textcolor{orange}{modifications described in \iftoggle{long}{\Cref{sec:pc-update}}{\Cref{lem:ccpc-update}}}}
    $\overline{E}\leftarrow$ incident edges of $\mu'$ leading to terminal path nodes\;
    $\mu^E,\,\mu^M \leftarrow\textsc{Split}(\mu, \overline{E})$\tcp*[l]{\textcolor{orange}{modifications described in \iftoggle{long}{\Cref{sec:pc-update}}{\Cref{lem:ccpc-update}}}}

    \uIf{$\mu^M$ has degree 4}{$\rho\leftarrow$ cyclic order of $E(\mu^M)$ where terminal path edges are non-adjacent\;}
    \lElse{$\rho\leftarrow$ arbitrary cyclic order of $E(\mu^M)$}
    \color{orange}
    \uIf{$\mu^M$ allows order $\overline{\rho}$}{\color{black}make $\mu^M$ an ordinary C-node with default cyclic order $\rho$\;}
    \lElse{make $\mu^M$ a fixed C-node with fixed cyclic order $\rho$}
  }
  \vspace{.2cm}

  \tcp{Steps 5 \& 6, Terminal Path Contraction}
  \For{edge $e$ on the terminal path}{
    \textsc{Contract}(e)\tcp*[l]{unchanged \cite[Section 3.4]{fpr-eco-21}}
  }
  \textcolor{orange}{
\If{any C-node on the initial terminal path was fixed}{
    mark the C-node $\mu^C$ resulting from contractions as fixed\;
  }}

  \caption{
    Updating a \pc to make a set $R$ consecutive.
    Our changes over \cite{fpr-eco-21} are highlighted in \textcolor{orange}{orange}.
    Note that impossible restrictions are detected within steps 1--4 instead of 1--3 as for ordinary PC-trees.
  }
  \label{alg:rcpc-update}
\end{algorithm}

To use these trees in the vertex-addition planarity test, we extend the different operations of PC-trees to also respect color-constraints. The operations \Merge and \Split can be easily implemented in an analogous fashion to PC-trees.
The main operation of interest is \Update.
  We outline our modifications to the \Update operation from ordinary PC-trees (of which multiple practical implementations exist~\cite{fpr-eco-21}) in the proof sketch for the following lemma; see also \Cref{alg:rcpc-update}.
  \iftoggle{long}{\Cref{sec:pc-update}}{The full version \cite{}} contains the full description and proof of correctness.

  \begin{restatable}\restateref{lem:ccpc-update}{lemma}{lemCCPCUpdate}\label{lem:ccpc-update}
    For a \pc $T$ and a subset $L$ of its leaves, there exists a \pc $T'=\update{T}{A}$ with $\Ord(T')=\{\tau\in\Ord(T)\mid A \text{ is consecutive in } \tau\}$, which can be found in $O(|A|)$ time.
  \end{restatable}
  \begin{proof}[Sketch.]
  We briefly outline the linear-time \Update operation of ordinary PC-trees while also sketching which adaptions are neccesary to also apply it to \pcs; see also \Cref{alg:rcpc-update} for an overview over the algorithm and our changes.
  The core insight is that, for an (ordinary) PC-tree $T$ to have leaves $A$ consecutive, the set of edges that are consistent with neither~$A$ nor~$L(T) \setminus A$ need to form a path, the so-called \emph{terminal path}~\cite{hm-pta-03,hm-ptp-04}; see \Cref{fig:pc-init}.
  Otherwise $T$ does not represent any cyclic order where~$A$ is consecutive, and \Update returns the null-tree.

  After identifying the terminal path, the \Update performs three further steps~\cite{fpr-eco-21}.
  First, all edges around the nodes on the terminal path are reordered so that the subtrees with only leaves in $A$ (called \emph{full}) and those with only leaves in $L(T) \setminus A$ (called \emph{empty}) lie on different sides, separated by the up to two edges of the terminal path.
  If this is not possible, then this is due to a (fixed) C-node (or color-constrained P-node) around which edges consistent with~$A$ and edges consistent with $L(T) \setminus A$ must alternate.
  It follows that $T$ does not represent a cyclic order where $A$ is consecutive, and we return the null-tree.

  Next, split (i.e. create a new adjacent P-node and reassign some incident edges) each P-node~$\mu$ on the terminal path twice, once to move all edges to full subtrees adjacent to $\mu$ to a new P-node~$\mu^F$ and a second time to move all edges to empty subtrees to a new P-node~$\mu^E$\iftoggle{long}{; see \Cref{fig:pc-upd-steps,fig:p-update-steps} in \Cref{sec:pctree-details}}{}.
  Now, $\mu$ is adjacent to $\mu^F$, $\mu^E$ and up to two further nodes from the terminal path, in which case it is converted to a C-node to ensure that the full and empty subtrees always lie on different sides of the terminal path.
  When splitting color-constrained P-nodes, some additional care needs to be taken to correctly distribute the constraints across the new nodes and edges.
  In \iftoggle{long}{\Cref{sec:pc-update}}{the full version}, we show that the splitting procedure for ordinary P-nodes can easily be augmented to ensure this.
Lastly, all nodes of the terminal path are contracted into a single C-node which has all incident full subtrees consecutive, and which for \pcs needs to be fixed if any of the constituent C-nodes were fixed.
  \end{proof}

The biggest difference between usual and \pcs concerns the operation \Intersect.
In contrast to usual PC-trees, given two \pcs~$T_1,T_2$ on the same set of leaves, there generally does not exist a \pc~$T$ with~$\Ord(T) = \Ord(T_1) \cap \Ord(T_2)$.
For example, if both \pcs consist of a single P-node with different fixed edges, both fixed orders can be interleaved arbitrarily in the intersection, which cannot be represented using a \pc.
This is however not strictly needed for the planarity test.
Instead, we will here only need to test whether~$\Ord(T_1) \cap \Ord(T_2) \ne \emptyset$.
Conceptually, such a test can be performed using an approach similar to the intersection of ordinary PC-trees, additionally checking for each pair of inner nodes whether they allow for at least one common fixed order.
\iftoggle{long}{\Cref{lem:lt-intersect} in \Cref{sec:linear-time} shows}{In the full version, we show} that, in the context of our planarity test, this can be checked in linear time.

\begin{statelater}{pcUpdate}
\subsection{Color-constrained PC-tree Update Procedure}\label{sec:pc-update}
The \Update procedure on \pcs is based on the same steps as the update on ordinary PC-trees, although we need to make some modifications to account for the constraints.
These modifications, which we already outlined in the proof sketch for \Cref{lem:ccpc-update}, can be summarized as follows; see also \Cref{alg:rcpc-update}.
The labeling in step \ref{pc-upd-labeling} works the same as on regular PC-trees.
The reordering in step \ref{pc-upd-reorder} now also needs to respect fixed C-nodes and the order of fixed edges around P-nodes.
When splitting nodes in step \ref{pc-upd-split}, we need to correctly distribute the constraints across the new nodes and edges resulting from the split.
Especially, splitting a node may separate restricted edges from fixed edges and thus also from the angles they want to be embedded in, making the split impossible.
Lastly, the contractions in step \ref{pc-upd-contract} need to respect fixed C-nodes when merging C-nodes.
In the following, we describe the changes which need to be made to obtain the correct result in greater detail.
To implement these changes in amortized linear time, we will need to add some additional counters to the data structure, which we discuss in \Cref{sec:linear-time}.

\begin{algorithm}[t]
  \For{node $\mu$ on the terminal path}{
    \textcolor{orange}{
    \uIf{$\mu$ is a fixed P- or C-node}{
      $\rho\leftarrow$ cyclic order of fixed edges at $\mu$\;
    }}
    \uElseIf{$\mu$ is a C-node}{
      $\rho\leftarrow$ cyclic edge order fixed up to reversal at $\mu$\;
    }
    \Else{
      \textbf{continue} at next loop iteration\;
    }
    \If{edges leading to full subtrees are \emph{not} consecutive in $\rho$}{
      \textbf{abort} and report impossible restriction\;
    }
    \If{edges leading to full subtrees or terminal path nodes are \emph{not} consecutive in $\rho$}{
      \textbf{abort} and report impossible restriction\;
    }
  }
  \textcolor{orange}{
  \If{not all fixed P- and C-nodes have their fixed edges leading to full subtrees on the same side of the terminal path}{
      \textbf{abort} and report impossible restriction\;
  }}

  \caption{
    Operation \textsc{CheckFlips}().
  }
  \label{alg:rcpc-flips}
\end{algorithm}

In step \ref{pc-upd-reorder} we need to ensure that the full fixed edges (i.e. the incident fixed edges leading to full subtrees) are consecutive in the cyclic order of fixed edges.
Furthermore, we need to ensure that both of the at most two terminal fixed edges are directly adjacent to this block of full fixed edges, or adjacent to each other if the block is empty.
Finally, we need to check that all P-nodes and all fixed C-nodes have their fixed full edges on the same side of the terminal path.
If any of these checks fails, we abort and report an invalid restriction; see \Cref{alg:rcpc-flips}.
We will ensure that all restricted full edges are consecutive with the fixed ones in the next step.

\begin{figure}[t]
  \centering
  \includegraphics[width=\linewidth]{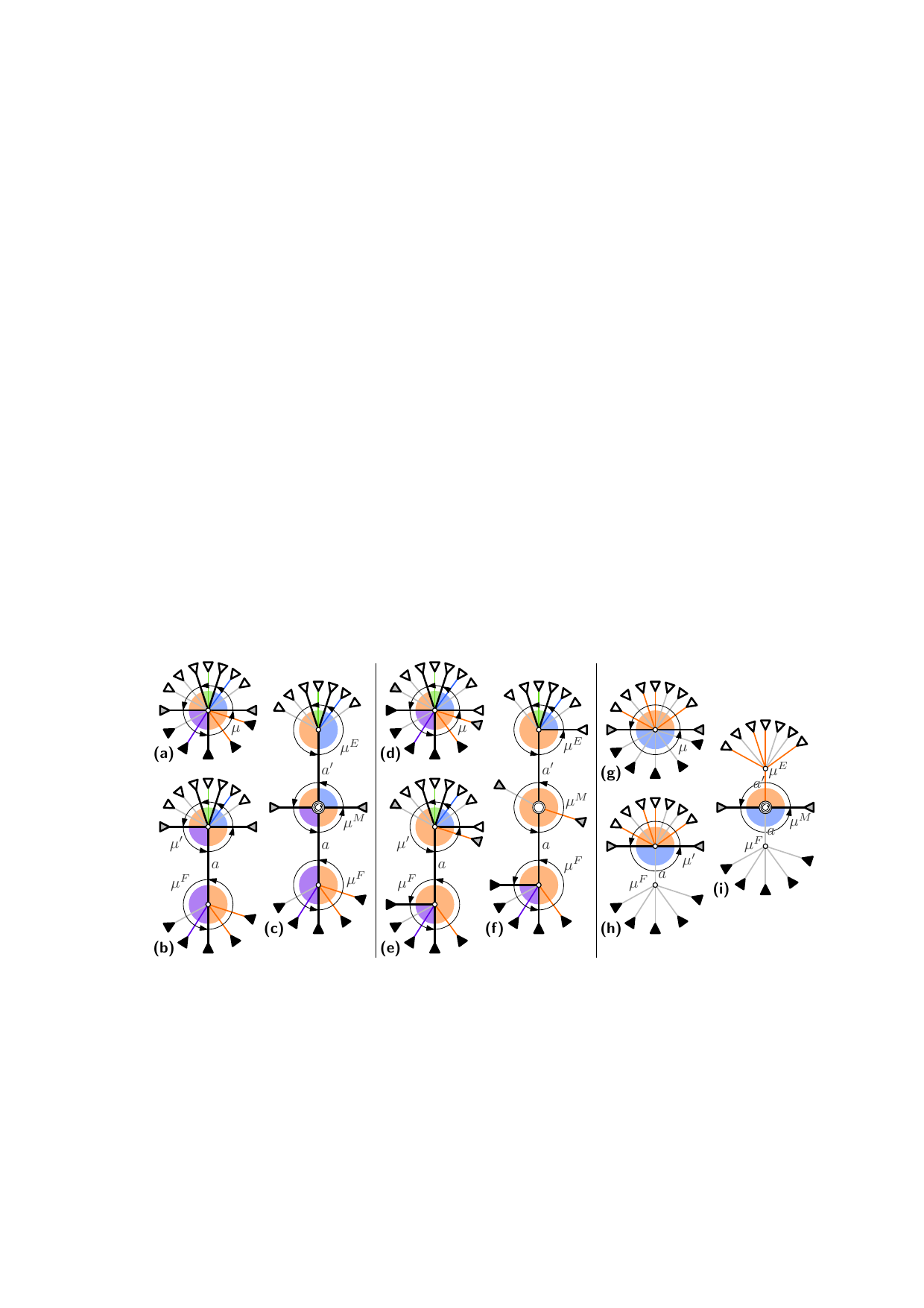}
  \caption{
    Splitting three different color-constrained P-nodes \textbf{(a)}, \textbf{(d)}, and \textbf{(g)}.
    The first step (\textbf{(b)}, \textbf{(e)}, and \textbf{(h)}) splits off the incident full (black) subtrees, the second step (\textbf{(c)}, \textbf{(f)}, and \textbf{(i)}) splits off the empty (white) subtrees.
    All splits up to \textbf{(f)} have fixed edges on both sides.
    The split in \textbf{(h)} splits off only unrestricted edges, while split \textbf{(i)} splits off unrestricted and restricted edges.
    Only the C-node created in \textbf{(f)} is not fixed, as the reversal of its shown rotation is also admissible.
  }
  \label{fig:p-update-steps}
\end{figure}

When splitting a color-constrained P-node with fixed edges in step \ref{pc-upd-split}, we also need to maintain the constraint information and especially ensure that the split parts still allow the same relative positions with regard to each other.
This especially means that the new edges connecting them need to have the right restrictions assigned.
We will describe how to do this when splitting off the full edges $F$ to $\mu^F$ (see \Cref{fig:pc-upd-B}), splitting off the empty ones from the resulting node $\mu'$ to $\mu^E$ works analogously\footnote{Note that in a linear-time implementation, we cannot process all incident empty edges. Instead splitting off the complement set, i.e., the terminal path edges, maintains the time bound.}.
We make a case distinction based on whether both $\mu^F$ and $\mu'$ receive at least one fixed edge; see \Cref{fig:p-update-steps}.
If this is the case~(\Cref{fig:p-update-steps}a--f), edge $a$ between $\mu^F$ and $\mu'$ is fixed at both ends and the orders of the fixed edges are set to
$\rho_{\mu'}=\rho_\mu[(F\cap F_\mu) \to a]$ and $\rho_{\mu^F}=\rho_\mu[(F^c\cap F_\mu) \to a]$.
All fixed edges retain their color, while $a$ is assigned the color that followed $F$ before the split at $\mu'$ and the color that preceded $F$ at $\mu^F$.
We need to check that all nodes still have at least one appropriate angle for every restricted edge, or abort and report an impossible restriction otherwise\footnote{As discussed in \Cref{sec:linear-time}, we can maintain counters tracking how many edge and angles of different colors are present at a node to check this.}.

If one of $\mu^F,\mu'$ received no fixed edges (\Cref{fig:p-update-steps}g--i), we assume without loss of generality that~$\mu'$ receives all fixed edges as the converse case works analogously.
Here, we set $\rho_{\mu'}=\rho_\mu$ and $\rho_{\mu^F}=\emptyset$, leaving the coloring of fixed edges as-is.
The restriction of edge $a$ is set according to which edges $\mu^F$ retained.
If there are restricted edges of more than one color at $\mu^F$, we abort and report an impossible restriction.
If there are restricted edges of exactly one color $c$ at $\mu^F$ (the analogous case for $\mu^E$ is shown in \Cref{fig:p-update-steps}i), we set the edge $a$ ($a'$ in \Cref{fig:p-update-steps}i) to be restricted to $c$ at~$\mu'$ and to be unrestricted at $\mu^F$.
If there are no restricted edges at $\mu^F$ (\Cref{fig:p-update-steps}h), $a$ is unrestricted at both its ends.

If the middle node $\mu^M$ resulting from the two splits has degree 4, we need to additionally restrict its order of incident edges such that the terminal path edges are non-adjacent, i.e., the full and empty nodes are on different sides of the terminal path.
Note that for a degree-4 node, there are at most two such admissible orders, which are the reverse of each other.
If both are allowed by $\mu^M$ (which currently is still a color-constrained P-node), we convert $\mu^M$ to an ordinary C-node with one of the two orders, otherwise to a fixed C-node with the single possible order; see~\Cref{fig:p-update-steps}.
For any degree-3 node, there are overall at most two orders, which are the reverse of each other; so can assume all these nodes to also be C-nodes.

Finally, we need to ensure that all middle nodes always have the empty and full subtrees, respectively, on the same side of the terminal path.
This is done by the contractions of adjacent C-nodes along the terminal path in step \ref{pc-upd-contract}.
The C-node resulting from this is fixed if and only if at least one of its constituent C-nodes was fixed.

\lemCCPCUpdate*\label{lem:ccpc-update*}
\begin{proof}
  We will show that the tree $T'$ we obtained by applying our modified \Update procedure as described above satisfies this condition.
  \Cref{lem:lt-update} in \Cref{sec:linear-time} details how to conduct the update in linear time.
  Note that $T'$ can be converted into an ordinary PC-tree, which we will refer to as $\Project(T')$, by converting all color-restricted P-nodes into ordinary P-nodes (dropping their color-constraints) and converting all fixed C-nodes into ordinary C-nodes (possibly now also allowing reversal of their orders).
  As our modified update makes the same changes to the tree structure as in the normal update operation we have $\Project(T')=\update{\Project(T)}{L}$ if the restriction is possible.
  As the projection only allows additional orders, we have $\Ord(T')\subseteq\Ord(\Project(T'))=\Ord(\update{\Project(T)}{L})$,
  in particular, $L$ is always consecutive in $T'$.

  To show the claimed equivalence of admissible orders,
  we first show that if $\tau\in\Ord(T)$ and $L$ is consecutive in $\tau$ then $\tau\in\Ord(T')$.
  Note that the restriction must be possible and $T'$ thus cannot be the null-tree, as an impossible restriction would imply that there is no $\tau\in\Ord(T)$ where $L$ is consecutive.
  As we have $\tau\in\Ord(\update{\Project(T)}{L})$ it is also $\tau\in\Ord(\Project(T'))$ due to the above equivalence and it remains to show that $\tau$ satisfies the color-constraints of $T'$.
  To do this, we will apply the changes made by the update procedure to $T$ while maintaining its embedding given by $\tau$ to obtain an admissible embedding of $T'$.
  First, observe that for all terminal nodes the incident full and empty subtrees with a fixed ordering are respectively consecutive and on different sides of the terminal path, as we would have otherwise returned a null-tree in step \ref{pc-upd-reorder}.
Now consider one of the two splits applied in step \ref{pc-upd-split}.
  Note that the split-off set $A$ needs to be consecutive in the embedding induced on the current P-node $\mu$ by order $\tau$ as otherwise $L$ would not be consecutive in $\tau$.
  Thus, the edges in $A$ can be reassigned to a new P-node $\mu'$ adjacent to $\mu$ in place of $A$ while maintaining the order of $A$.
  Note that if $A$ contains no fixed edges, all edges of $A$ were embedded in a single angle.
  If there are no fixed but restricted edges in $A$, they all need to have the same color which coincides with the color restricting the edge that replaces~$A$.
The conversion of $\mu^M$ into a C-node after the two splits can also be done while maintaining the embedding, as the only disallowed rotations of $\mu^M$ are those that do not have full and empty subtrees on different sides of the terminal path.
Finally, contracting all C-nodes on the terminal path cannot contradict the embedding as we already ensured that all full and empty subtrees are on the correct sides of the terminal path.

  Conversely, we need to show that if $\tau\in\Ord(T')$ then also $\tau\in\Ord(T)$.
  From our initial considerations it follows that we have $\tau\in\Ord(\Project(T'))=\Ord(\update{\Project(T)}{L})$ and it again remains to show that $\tau$ satisfies the color-constraints of $T$.
  To do this, we undo the changes made by the update procedure that turned $T$ into $T'$ while maintaining its embedding given by~$\tau$ to obtain an admissible embedding of $T$.
  As all changes only restrict the number of admissible embeddings, undoing them cannot turn an admissible embedding invalid.
\end{proof}
\end{statelater}

\subsection{Testing Biconnected Partially Embedded Graphs}\label{sec:peplan-bicon}
Having the underlying data structure in place, we now turn to our algorithm for testing \peplan.
In the following, we will describe the generic Booth \& Lueker planarity testing algorithm \cite{bl-tft-76} in two steps: first the underlying concepts, then the actual algorithm.
Directly after each step we highlight the changes that need to be made for testing partial instances.

To test planarity of graph $G=(V,E)$, we iteratively insert its vertices in a certain order, that is, in each step $i\in\{1,\dots,n\}$ we grow the set $V_i = \{v_1,\dots,v_i\} \subseteq V$ of already-inserted vertices.
At each step, we partition the edges of~$G$ into three types: \emph{Embedded edges} have both endpoints in~$V_i$, \emph{half-embedded} have exactly one endpoint in~$V_i$ and \emph{unembedded edges} have both endpoints in~$V \setminus V_i$.
When inserting vertex~$v_i$ into the graph, its incident unembedded edges become half-embedded and its half-embedded edges become embedded.
We denote by~$G_i$ the subgraph of~$G$ induced by~$V_i$, and by~$G_i^+$ the graph obtained from~$G_i$ by adding each half-embedded edge~$e=uv$ with~$u \in V_i$ as half-edge starting at~$u$; see \Cref{fig:bic-emb-graph}.

The central idea of the planarity test is to use a vertex order that has $G[V\setminus V_i]$ connected at each step $i\in\{1,\ldots,n\}$.
This ensures (by the Jordan curve theorem) that all half-embedded edges must be embedded in the same face of~$G_i^+$, without loss of generality, the outer face.
We will for now assume $G$ to be biconnected and use an st-ordering of its vertices.
For an \emph{st-ordering} $s=v_1,\ldots,v_n=t$ of $V$, the edge $st$ must be present in $G$ and each vertex except for~$s$ and~$t$ must have a neighbor which comes earlier in the ordering as well as a neighbor which comes later~\cite{et-cas-76}.
Especially, this ensures that both $G[V_i]$ and $G[V\setminus V_i]$ are connected. 

Observe that a planar embedding of~$G$ determines a planar embedding of each $G_i^+$.
Let $\Emb(G_i^+)$ denote the set of all planar embeddings of~$G_i^+$ with all half-edges on the outer face.
For an embedding $\mathcal E \in \Emb(G_i^+)$ of~$G_i^+$, let~$\Ord(\mathcal E)$ be its cyclic order of half-edges on the outer face; see \Cref{fig:bic-emb-graph}.
We define $\Ord(G_i^+)=\{\Ord(\mathcal E)\mid \mathcal E \in \Emb(G_i^+)\}$ to be the set of all such orders.

\subparagraph*{Neccesary Conditions for Partially Embedded Planarity.}
We now extend these notions to the partial setting.
Let $(G,H,\mathcal H)$ be a partially embedded graph where $G$ is biconnected.
Analogously to $G_i$, we set $H_i=H[V_i \cap V(H)]$, define $H_i^+$ to be the graph obtained from $H_i$ by adding all half-embedded fixed edges as half-edges, and define $\mathcal H_i^+$ as the restriction of $\mathcal H$ to $H_i^+$.
Let $\mathcal H_i^\oplus$ be a topological drawing of $\mathcal H_i^+$ inside a disk whose boundary visits the non-vertex endpoints of the half-edges according to their order on the outer face.

Consider an embedding $\mathcal G$ that is a solution for $(G,H,\mathcal H)$ and for $i\in\{1,\ldots,n\}$ a partial solution $\mathcal G_i^+$ that is a restriction of $\mathcal G$ to $G_i^+$.
Analogously to $\mathcal H_i^\oplus$, we define $\mathcal G_i^\oplus$ to be a topological disk-drawing of $\mathcal G_i^+$ with all half-edges ending at its boundary.
Note that each face of $\mathcal G_i^\oplus[H_i^+]$, which is the restriction of $\mathcal G_i^\oplus$ to $H_i^+$, corresponds to a distinct face of $\mathcal H_i^\oplus$.
The embedding $\mathcal G_i^+$ that is a partial solution satisfies the following three properties:
\begin{enumerate}[(E1)]
\item All half-edges lie on the outer face. \label[property]{prop:half-edges}
\item Drawing $\mathcal G_i^\oplus[H_i^+]$ coincides with $\mathcal H_i^\oplus$. \label[property]{prop:fixed-rotations}\item Each edge $e\in E(G_i^+)\setminus E(H)$ with $f(e)\neq\bot$ is embedded in a face of $\mathcal G_i^\oplus[H_i^+]$ that corresponds to $f(e)$ in $\mathcal H$. \label[property]{prop:colored-faces}
\end{enumerate}
Note that \cref{prop:half-edges} is the same as for the ordinary planarity test
and that any planar embedding $\mathcal G_n^+$ that satisfies \cref{prop:fixed-rotations} is a solution for the partially embedded graph.
\cref{prop:colored-faces} is used to show the maintenance of E\ref{prop:fixed-rotations} throughout the algorithm.
Let $\Emb_{\mathcal H}(G_i^+)$ be the set of all embeddings of $G_i^+$ that satisfy \cref{prop:half-edges,prop:fixed-rotations,prop:colored-faces},
and let $\Ord_{\mathcal H}(G_i^+)=\{\Ord(\mathcal E)\mid \mathcal E \in \Emb_{\mathcal H}(G_i^+)\}$ contain all orders of half-edges on the outer face from these embeddings.

\begin{figure}[t]
  \centering
  \begin{subfigure}{\linewidth}
    \includegraphics[page=1]{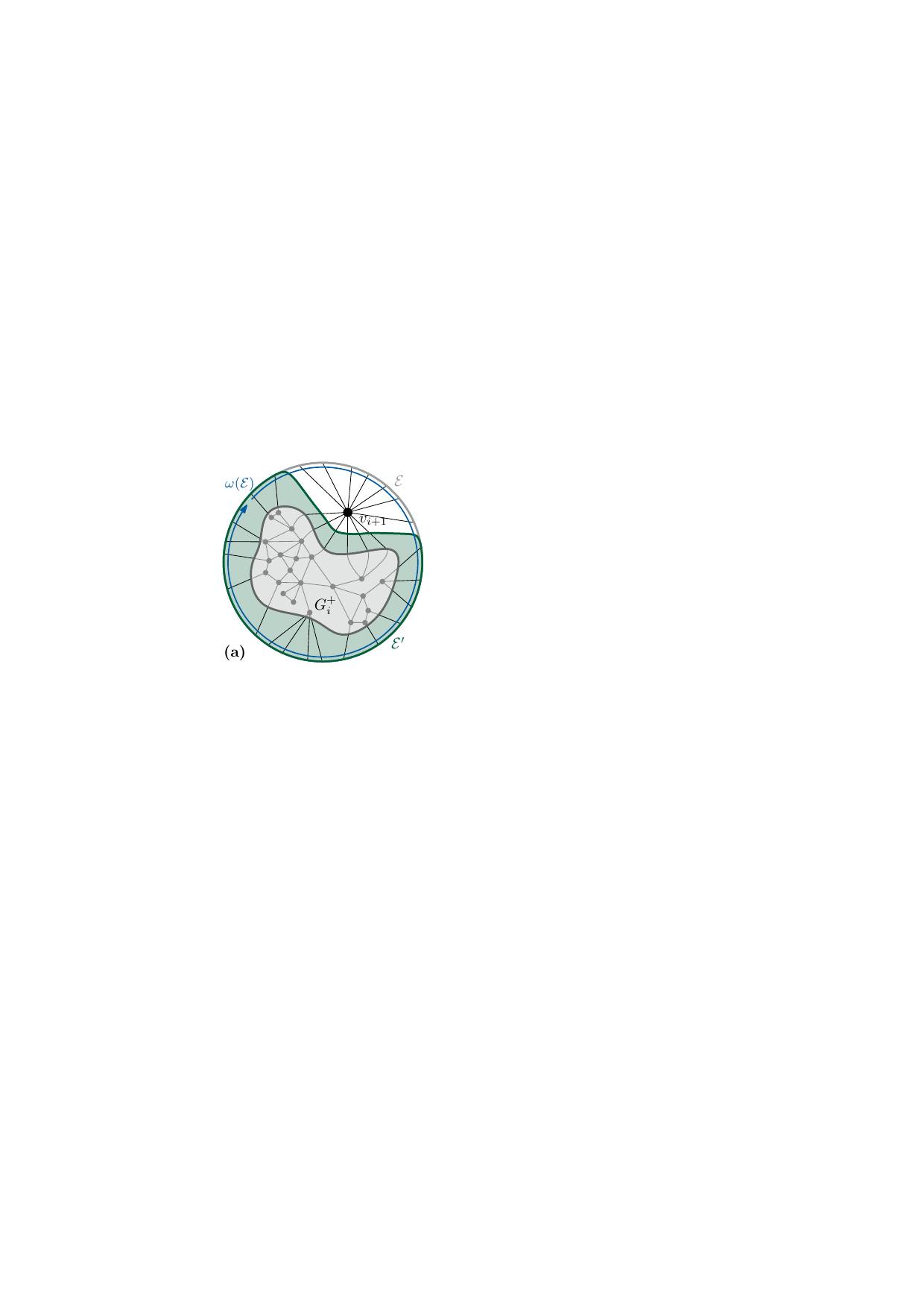}\hspace*{.1cm}\includegraphics[page=2]{fig/embeddings}\hspace*{.1cm}\includegraphics[page=6]{fig/embeddings}\phantomsubcaption\label{fig:bic-emb-graph}
    \phantomsubcaption\label{fig:bic-emb-tree}
    \phantomsubcaption\label{fig:bic-emb-update}
    \phantomsubcaption\label{fig:bic-emb-next}
  \end{subfigure}
  \vspace{-.8cm}
  \caption{
    \textbf{(a)}~A~drawing $\mathcal E'$ of $G_i^+$ with its half-edges in the green area. Adding the next vertex~$v_{i+1}$ yields drawing $\mathcal E$ of $G_{i+1}^+$.
    \textbf{(b)}~PC-tree $T_i$ representing all planar embeddings of $G_i^+$. The edges $F$ need to be made consecutive for $v_{i+1}$.
      Orders relevant for the proof of \Cref{lem:peplan-bicon} are shown in blue.
    \textbf{(c)}~PC-tree $\update{T_i}{F}$. Splitting this tree also splits C-node $\mu$ into $\mu'$ in $\usplit{T_i}{F}$ and $\mu^F$ in $\usplit{T_i}{F^c}$.
      The edges incident to $\mu^F$ are those in $E(T_i,F)$.
      If the bold edges of $v_{i+1}$ are considered fixed, they fix the flip of $\mu$ and thus also of $\mu^F$ and $\mu'$.
    \textbf{(d)}~PC-tree $T_{i+1}$ obtained by merging trees $T'$ and $S'$, where $\mu'$ is what remains of the split C-node $\mu$.
  }
  \label{fig:bic-emb}
  \vspace{-.2cm}
\end{figure}

\enlargethispage{1em}
\subparagraph*{Testing Planarity.}
To test the planarity of a biconnected graph $G$ given an st-ordering $v_1,\dots,v_n$ of its vertices, we compute PC-trees $T_1,\ldots,T_{n-1}$ satisfying the invariant $\Ord(G_i^+) = \Ord(T_i)$ for all $i\in\{1,\ldots,n-1\}$.
Tree~$T_1$ consists of a single P-node with leaves $E(v_1)$.
Given PC-tree $T_i$, the next PC-tree $T_{i+1}$ is obtained as follows; see also \Cref{alg:hl-bicon-plan}.
Conceptually, we make the half-edges $F$ that lead from $G_i^+$ to $v_{i+1}$ consecutive in $T_i$ and replace them by a single edge leading to a new P-node with leaves $E(v_{i+1})\setminus F$.
Formally, we first turn~$v_{i+1}$ into a PC-tree $S$ consisting of a single P-node with leaves $E(v_{i+1})$.
We make the edges $F$ between $G_i^+$ and $v_{i+1}$ consecutive in both $T_i$ and $S$ using the \Update operation; see \Cref{fig:bic-emb-tree,fig:bic-emb-update}.
We split the resulting PC-tree $\update{T_i}{F}$ into trees $T^F=\usplit{T_i}{F}[F^c]$ and $T'=\usplit{T_i}{F}$, where $T^F$ describes the order of half-edges $F$ leading from~$G_i^+$ to~$v_{i+1}$ and $T'$ describes the order of the remaining half-edges of $G_i^+$.
Similarly, we split $\update{S}{F}$ into $S^F=\usplit{S}{F}[F^c]$ and~$S'=\usplit{S}{F}$, where $S^F$ describes the order of half-edges $F$ leading from $v_{i+1}$ to $G_i^+$ and $S'$ describes the order of the remaining half-edges of $v_{i+1}$.
Note that $L(S^F)=L(T^F)=F\cup\{\ell\}$ and $L(S')\cap L(T')=\{\ell\}$.
Furthermore, $L(S')\cup L(T')$ contains all half-embedded edges that are present after step $i+1$ plus $\ell$, that is $L(S')\cup L(T')=L(T_{i+1})\cup\{\ell\}$.
Finally, we merge trees $S'$ and $T'$ at $\ell$ to obtain $T_{i+1}$ with $\Ord(T_{i+1})=\Ord(\merge{T'}{S'})$; see \Cref{fig:bic-emb-next}.
If none of the steps fails due to an impossible update, this means that we found (implicit) planar embeddings for all considered subgraphs.
Finding a non-null PC-tree $T_{n-1}$ then suffices to show planarity, as the edges of $v_n$ are always consecutive in $T_{n-1}$ (which has no further leaves).
Conversely, if the process fails at any step, this is due to a subgraph having no planar embedding in which all edges to the next vertex are consecutive, meaning that the graph is non-planar \cite{bl-tft-76}.

\subparagraph*{Testing Partially Embedded Planarity.}
To test an instance $(G,H,\mathcal H)$ of partially embedded planarity we basically apply the same algorithm,
but whenever we create a P-node for a vertex $v_i$, we now create a color-constrained P-node that also reflects the color-constraints of $v_i$.
There are two important additional differences to the ordinary planarity algorithm; see also lines \ref{line:intersect} to \ref{line:fix-flip} of \Cref{alg:hl-bicon-plan}.
First, the orders of half-edges around $v_{i+1}$ are now not only constrained by $T^F$, but also by~$v_{i+1}$, that is by \pc $S^F$.
We need to check that $T^F$ and $S^F$ allow for at least one common order of~$F$, that is whether $\Ord(T^F)\cap\overline{\Ord(S^F)}\neq\emptyset$ as the order of edges in $F$ entering $v_{i+1}$ is the reversal of the order in which they leave $G_i^+$.
Second, if $X=E(\update{T_i}{F}, F)$ is a set of edges that are consecutive around a C-node $\mu$ of $\update{T_i}{F}$, the constraints of $v_{i+1}$ may fix the order of $X$ around $\mu$ and thus the flip of~$\mu$; see \Cref{fig:bic-emb-update}.
Note that splitting $T_i$ then splits $\mu$ into a C-node~$\mu^F$ in~$T^F$ and a C-node~$\mu'$ in~$T'$, both incident to the leaf $\ell$ introduced by the split.
Both~$\mu^F$ and~$\mu'$ need to be fixed if the order of $X$ around $\mu$ is fixed by $v_{i+1}$.
We detect and handle this as follows.
After finding one order in $\Ord(T^F)\cap\overline{\Ord(S^F)}$ in the intersection test, we check whether the intersection also contains a second order where $\mu^F$ is flipped the other way.
If this is not the case, $v_{i+1}$ fixes $\mu$ and we accordingly fix the flip of~$\mu'$ in the copy $T''$ of $T'$; otherwise we set $T''$ to be equal to $T'$.
Finally, we merge the trees $S'$ and $T''$ at $\ell$ as before to obtain $T_{i+1}$.
Again, we can apply this process until we obtain a \pc~$T_{n-1}$ or otherwise conclude that the instance is negative.
Note that we here also need to perform the last step of the algorithm to check that the constraints of~$v_n$ are respected.

This concludes the description of our algorithm for testing \peplan.
See \Cref{alg:hl-bicon-plan} for high-level pseudo-code highlighting our changes and \iftoggle{long}{\Cref{alg:peplan-bicon} in \Cref{sec:peplan-details} for the full implementation.}{the full version \cite{} for a detailed implementation.}
For correctness of our algorithm, it remains to show that the above steps actually yield PC-trees that satisfy \cref{prop:half-edges,prop:fixed-rotations,prop:colored-faces}.
\iftoggle{long}{
  We give a full proof in \Cref{sec:peplan-details} and only sketch the main points here.
}{
  We will only sketch the main points here, the full proof can be found in the full version \cite{}.
}

\begin{restatable}\restateref{lem:peplan-bicon}{lemma}{lemPeplanBicon}\label{lem:peplan-bicon}
For every step $i\in\{1,\ldots,n\}$ of the algorithm, $\Ord_\mathcal{H}(G_i^+) = \Ord(T_i)$ holds.
\end{restatable}
\begin{proofsketch}
  The proof works analogously to the proof for the ordinary planarity test showing $\Ord(G_i^+) = \Ord(T_i)$ by induction on the number of steps.
  In the proof we will explicitly note where additional arguments are needed for the partial setting, otherwise the argumentation is taken directly from the ordinary planarity setting considering $\Ord(G_i^+)$ instead of $\Ord_\mathcal{H}(G_i^+)$.
  \cref{prop:half-edges} as well as \cref{prop:fixed-rotations,prop:colored-faces} in the partial case trivially hold for the first step, showing $\Ord_\mathcal{H}(G_1^+) = \Ord(T_1)$.
  Assuming that $\Ord_\mathcal{H}(G_i^+) = \Ord(T_i)$ holds for step $i$, the statement for the next step $i+1$ can be shown by arguing both inclusions separately.

  To show $\Ord_\mathcal{H}(G_{i+1}^+) \subseteq \Ord(T_{i+1})$, take an order $\sigma\in\Ord_\mathcal{H}(G_{i+1}^+)$ and let $\mathcal E\in\Emb_\mathcal{H}(G_{i+1}^+)$ be a corresponding embedding with $\Ord(\mathcal E)=\sigma$.
  Let $\mathcal E'$ be the embedding of $G_i^+$ obtained by deleting~$v_{i+1}$; see \Cref{fig:bic-emb-graph}.
  It can easily be shown that $\mathcal E'\in\Emb_\mathcal{H}(G_i^+)$ and we have, thanks to $G_i$ being connected, $\tau_1=\Ord(\mathcal E')\in\Ord_\mathcal{H}(G_i^+)$ and, by the inductive hypothesis, $\tau_1\in\Ord(T_i)$.
  All edges in $F$ must be consecutive in $\tau_1$ and we thus have $\tau_1 \in \Ord(\update{T_i}{F})$.
  For the ordinary planarity test, one can directly show $\sigma=\tau_1[F\to\ell]\otimes_\ell\tau_2[F\to\ell] \in \Ord(\merge{\usplit{T_i}{F}}{\usplit{S}{F}})=\Ord(T_{i+1})$ using the rotation $\tau_2$ of $v_{i+1}$ in $\mathcal E$; see~\Cref{fig:bic-emb-tree}.
  In the partial setting, $T_{i+1}$ may have instead been obtained from $T_i$ by fixing the flip of a split C-node~$\mu$, turning the last equality in this chain into a non-strict superset inclusion $\supseteq$.
  Fortunately, as $\tau_1$ and $\tau_2$ and thus also $\sigma$ are derived from a drawing, the flips $\tau_1$ and $\tau_2$ induce for the split parts of $\mu$ always line up with their fixed flip.
  Thereby, we can show that $\sigma$ is always contained in $\Ord(T_{i+1})$ for any $\sigma=\Ord(\mathcal E)\in\Ord_\mathcal{H}(G_{i+1}^+)$, concluding the proof for this direction.

  To conversely show $\Ord_\mathcal{H}(G_{i+1}^+) \supseteq \Ord(T_{i+1})$, take an order $\sigma \in \Ord(T_{i+1})$ and let $\sigma_1\in T'$ and $\sigma_2\in S'$ be such that $\sigma=\sigma_1\otimes_\ell\sigma_2$. To be able to apply the inductive hypothesis, we seek an order $\sigma_F\in \Ord(T^F)\,\cap\,\overline{\Ord(S^F)}$ of $F\cup\{\ell\}$
  such that $\tau_1=\sigma_1\otimes_\ell\sigma_F$ not only lies (by construction) in $\Ord(\merge {T'} {T^F})$, but also
  in $\Ord(\update{T_i}{F})\subseteq\Ord(T_i)$.
  For the ordinary planarity test this is easy, as $S$ and $S^F$ each consist of a single P-node, allowing arbitrary orders.
  In the partial setting, $S$ consists of a color-constrained P-node that may only allow certain orders, leading to the intersection $\Ord(T^F)\,\cap\,\overline{\Ord(S^F)}$ potentially being empty or not containing an order that is compatible with $\sigma_1$ to form an admissible order of $T_i$.
  In the algorithm, we explicitly guard against the intersection being empty.
  For the second issue recall that, by \Cref{lem:merge-splits}, an incompatiblity can only arise from the flip of $\mu^F$ in $T^F$ induced by $\sigma_F$ not coinciding with the flip of $\mu'$ induced by $\sigma_1$ in $T'$.
  In the algorithm, the intersection test either found orders for both flips of $\mu^F$ or otherwise the flip of $\mu'$ in $T'$ was fixed to the one of $\mu^F$.
  This ensures that we can always find a compatible $\sigma_F$ yielding a $\tau_1\in\Ord(T_i)$.

  By the inductive hypothesis, we have $\tau_1\in\Ord_\mathcal{H}(G_i^+)$ and there exists an embedding $\mathcal E'\in\Emb(G_i^+)$ with $\Ord(\mathcal E')=\tau_1$.
  Furthermore, we have $\tau_2=\sigma_2\otimes_\ell\overline{\sigma_F}\in\Ord(\update{S}{F})\subseteq\Ord(S)$.
  We choose $\tau_2$ as rotation for $v_{i+1}$ when adding it to $\mathcal E'$ to obtain a planar embedding $\mathcal E$ of $G_{i+1}^+$.
  We can show that this results in $\Ord(\mathcal E) =
  \tau_1[F\to\ell] \otimes_\ell \tau_2[F\to\ell] =
  \sigma_1 \otimes_\ell \sigma_2=\sigma\in \Ord(G_{i+1}^+)$, satisfying \cref{prop:half-edges}.
  This already concludes the proof in case of the ordinary planarity test, while for the partial setting we also have to show maintenance of \cref{prop:fixed-rotations,prop:colored-faces}.

  If $v_{i+1}\notin H$ this can easily be done, as $H_{i+1}^+$ is unchanged from $H_{i}^+$ for \cref{prop:fixed-rotations}.
  Further note that in his case, $v_{i+1}$ and all its incident edges, especially those in $F$, lie in the same face of $\mathcal H$.
  \Cref{prop:colored-faces} holding for all edges $F$ in $\mathcal E'$ ensures this face coincides with the restriction the edges of $v_{i+1}$ have, showing this property is maintained in $\mathcal E$.
  
  If $v_{i+1}\in H$, a case-by-case analysis of the edges incident to $v_{i+1}$ based on their types (i.e., fixed, restricted, and/or unrestricted) can be used to show that \cref{prop:colored-faces} is maintained.
  Regarding \cref{prop:fixed-rotations}, note that the constructed rotation for $v_{i+1}$ respects the constraints of $\mathcal H$.
  The relative position of $v_{i+1}$ with regard to $G_i$ can only be wrong if $v_{i+1}$ is not connected to any vertex of $H_i$, that is if $F$ contains no edges of~$H$.
  In this case, all edges in $F$ are restricted to be embedded in the same face, which is incident to $v_{i+1}$ and some vertices from $V_i$.
  Hence, in this case \cref{prop:colored-faces} ensures the correct relative positions and thus \cref{prop:fixed-rotations}.
  As all three properties are satisfied, we get $\sigma\in\Ord_\mathcal{H}(G_{i+1}^+)$ for both $v_{i+1}\notin V(H)$ and $v_{i+1}\in V(H)$, concluding our~proof.
\end{proofsketch}
\begin{prooflater}{proofPeplanBicon}
We prove this by induction on the number of steps.
For step $i=1$, observe that $T_1$ by construction allows the same rotations as $v_1$.
Thus, $\Ord_\mathcal{H}(G_1^+) = \Ord(T_1)$ holds.
For the inductive step, assume that $\Ord_\mathcal{H}(G_i^+) = \Ord(T_i)$ holds for step $i$.
We will show the statement for the next step $i+1$ by arguing both inclusions separately.

\subparagraph*{Direction $\Ord_\mathcal{H}(G_{i+1}^+) \boldsymbol{\subseteq} \Ord(T_{i+1})$.} 
Let $\sigma\in\Ord_\mathcal{H}(G_{i+1}^+)$ and let $\mathcal E\in\Emb_\mathcal{H}(G_{i+1}^+)$ be a corresponding embedding with $\Ord(\mathcal E)=\sigma$.
Let $\mathcal E'$ be the embedding of $G_i^+$ obtained by deleting $v_{i+1}$ together with its incident half-edges from $\mathcal E$, turning incident ordinary edges to half-edges; see \Cref{fig:bic-emb-graph}.
As~$\mathcal E\in\Emb_\mathcal{H}(G_{i+1}^+)$, it satisfies \cref{prop:half-edges,prop:fixed-rotations,prop:colored-faces}.
Note that due to \cref{prop:half-edges}, $v_{i+1}$ must be on the outer face if it has half-edges.
If it has none, we have $i+1=n$, $G_{i+1}^+$ contains no half-edges, and we can thus choose an arbitrary face incident to $v_n$ to be the outer one.
Removing $v_{i+1}$ and turning its incident edges into half-edges thus leaves all half-edges on the same face (the outer one) and $\mathcal E'$ thus satisfies \cref{prop:half-edges}.
As~$\mathcal E$ satisfies \cref{prop:fixed-rotations}, its restriction $\mathcal E'$ also does so.
To argue \cref{prop:colored-faces} we now consider a drawing $\mathcal E^\oplus$ of $\mathcal E$ on a disk, defined analogously to disk-drawing $\mathcal H_i^\oplus$ of $\mathcal H_i$.
We consider three different types of faces of $\mathcal H$ that are present in $\mathcal E^\oplus$.
Faces that are not incident to $v_{i+1}$, together with all edges of $G-H$ they contain, remain unchanged in $\mathcal E'^\oplus$, thus these edges still satisfy \cref{prop:colored-faces}.
Faces that are incident to $v_{i+1}$ but no vertex from $V_i$ are incident to the border of $\mathcal E^\oplus$ and only contain half-edges with $v_{i+1}$ as endpoint.
These faces together with all their contained edges are removed in $\mathcal E'^\oplus$ and they can thus not violate \cref{prop:colored-faces}.
Lastly, consider the set $\mathcal F$ of faces that are incident to $v_{i+1}$ as well as a vertex from $V_i$.
At most two of these may also be incident to the border of $\mathcal E^\oplus$, while the remaining ones are closed by~$v_{i+1}$ in $\mathcal E^\oplus$.
Note that in case $v_{i+1}\notin V(H)$, we have $|\mathcal F|=1$ as $v_{i+1}$ and its incident edges must lie entirely within one face of~$\mathcal H$.
In either case, all faces of $\mathcal F$ are also present in $\mathcal E'^\oplus$.
The edges between $v_{i+1}$ and $V_i$ turn into half-edges, the half-edges incident to $v_{i+1}$ are removed, while the half-edges incident to $V_i$ are retained.
As the assignment of these (half-)edges to faces remains unchanged, \cref{prop:colored-faces} is satisfied also in this last case.

As all three properties are satisfied in $\mathcal E'$, we thus have $\mathcal E'\in\Emb_\mathcal{H}(G_i^+)$.
As $G_i$ is connected by \cref{asm:connected} (the st-ordering ensures this), we can define $\tau_1=\Ord(\mathcal E')$.
Similarly, let $\tau_2$ be the order of all edges incident to $v_{i+1}$ in $\mathcal E$; see \Cref{fig:bic-emb-tree}.
As all half-edges are on the outer face of $\mathcal E$ and $\mathcal E'$, $F$ is consecutive both in $\tau_2$ and in $\tau_1$.
Observe that $\tau_1[F]=\overline{\tau_2[F]}$.
Since $\mathcal E$ can be obtained by combining $\mathcal E'$ with $v_{i+1}$ using~$\tau_2$ as rotation, order $\sigma$ can be obtained by merging $\tau_1$ and $\tau_2$ at $F$, that is $\sigma=\sigma_1\otimes_\ell\sigma_2$ for $\sigma_1=\tau_1[F\to\ell]$ and $\sigma_2=\tau_2[F\to\ell]$. 

As $\mathcal E'\in\Emb_\mathcal{H}(G_i^+)$ we have $\tau_1\in\Ord_\mathcal{H}(G_i^+)$ and, by the inductive hypothesis, $\tau_1\in\Ord(T_i)$.
All edges in $F$ are consecutive in $\tau_1$ and we thus have $\tau_1 \in \Ord(\update{T_i}{F})$.
As $\sigma_1=\tau_1[F\to\ell]$, it follows that $\sigma_1 \in \Ord(T')$ with $T'=\usplit{T_i}{F}$.
Note that $\tau_2\in\Ord(S)$ by construction of $S$.
As above, all edges in $F$ are consecutive in $\tau_2$ and thus $\tau_2 \in \Ord(\update{S}{F})$.
As $\sigma_2=\tau_2[F\to\ell]$, it follows that $\sigma_2 \in \Ord(S')$ with $S'=\usplit{S}{F}$.
Recall that $T_{i+1}=\merge{T''}{S'}$ where either $T''=T'$ or $T''$ is obtained from $T'$ by fixing the flip of C-node $\mu'$ adjacent to $\ell$.
If $T'=T''$ we directly have $\sigma=\sigma_1\otimes_\ell\sigma_2 \in \Ord(\merge{T'}{S'})=\Ord(\merge{T''}{S'})=\Ord(T_{i+1})$ as claimed due to $\sigma_1 \in \Ord(T')$ and $\sigma_2 \in \Ord(S')$.
Otherwise, we have $\Ord(\merge{T'}{S'})\supseteq\Ord(\merge{T''}{S'})$ and to show $\sigma_1\otimes_\ell\sigma_2 \in \Ord(\merge{T''}{S'})$ it suffices to show $\sigma_1 \in \Ord(T'')$.
That is, the flip that $\sigma_1$ induces on $\mu'$ coincides with the flip of $\mu$ dictated by $v_{i+1}$; see \Cref{fig:bic-emb-update}.
Note that the former is the same as the flip of $\mu$ induced by $\tau_1$, 
while the latter is the same as the flip of $\mu^F$ induced by $\overline{\tau_2[F^c\to\ell]}$.
As $\tau_1[F^c\to\ell]=\overline{\tau_2[F^c\to\ell]}$ and the projection of $\tau_1$ does not change the flip of $\mu$ it induces, both flips have to be the same and $\sigma_1\otimes_\ell\sigma_2 \in \Ord(\merge{T''}{S'})$.
This shows that $\sigma\in\Ord(T_{i+1})$ for any $\sigma=\Ord(\mathcal E)\in\Ord_\mathcal{H}(G_{i+1}^+)$ and thereby concludes the proof for this direction.

\subparagraph*{Direction $\Ord_\mathcal{H}(G_{i+1}^+) \boldsymbol{\supseteq} \Ord(T_{i+1})$.}
Let $\sigma \in \Ord(T_{i+1})$ and recall that, with $T'=\usplit{T_i}{F}$ and $S'=\usplit{S}{F}$, we have $\Ord(T_{i+1})\subseteq\Ord(\merge{T'}{S'})$, where equality holds if we did not fix the flip of~$\mu'$.
Note that here we used \cref{asm:consecutive} as we assume $\Ord(T'),\Ord(S')\neq\emptyset$.
Let $\sigma_1\in T'$ and $\sigma_2\in S'$ be such that $\sigma=\sigma_1\otimes_\ell\sigma_2$ following \Cref{lem:merge-splits}. With $T^F=\usplit{T_i}{F^c}$ and $S^F=\usplit{S}{F^c}$, let $\sigma_F\in \Ord(T^F)\,\cap\,\overline{\Ord(S^F)}$ be an order of $F\cup\{\ell\}$ where the induced flip of $\mu^F$ in $T^F$ coincides with the flip of $\mu'$ induced by $\sigma_1$ in $T'$.
Such an order has to exist as the intersection test either found orders for both flips of $\mu^F$ or otherwise the flip of $\mu'$ was fixed to the one of $\mu^F$.
We set $\tau_1=\sigma_1\otimes_\ell\sigma_F$ and $\tau_2=\sigma_2\otimes_\ell\overline{\sigma_F}$. If~$E(T_i,F)$ is a single edge, we directly get $\tau_1\in\Ord(\merge {T'} {T^F})=\Ord(\update{T_i}{F})\subseteq\Ord(T_i)$ due to \Cref{lem:merge-splits} as $\sigma_1\in\Ord(T')$ and $\sigma_F\in\Ord(T^F)$.
Otherwise, $\tau_1\in\Ord(\update{T_i}{F})$ only holds if the induced flips of $\mu'$ and~$\mu^F$ correspond to the same flip of $\mu$.
As we chose $\sigma_F$ to satisfy this, we also get $\tau_1\in\Ord(T_i)$ in this case.
Furthermore, we always have $\tau_2\in\Ord(\update{S}{F})\subseteq\Ord(S)$ as $E(\update{S}{F},F)$ is a single edge.
This is because it follows from \cref{asm:p-node} that the leaves~$F$ are all adjacent to the same P-node in~$S$ (which is in this case the only inner node of $S$).

By the inductive hypothesis, we have $\tau_1\in\Ord_\mathcal{H}(G_i^+)$ and there exists an embedding $\mathcal E'\in\Emb_\mathcal{H}(G_i^+)$ with $\Ord(\mathcal E')=\tau_1$.
We choose $\tau_2$ as rotation for $v_{i+1}$ and add it to $\mathcal E'$ to obtain an embedding $\mathcal E$ of $G_{i+1}^+$.
We thereby effectively complete the half-edges $F$ in~$\mathcal E'$ by connecting them to $v_{i+1}$ and insert the remaining edges of $v_{i+1}$ as new half-edges.
Regarding the orders of these edges, recall that $F$ is consecutive but oppositely ordered in $\tau_1$ and $\tau_2$.
This ensures that $\mathcal E$ is planar and has all half-edges on the outer face, that is \cref{prop:half-edges} is satisfied.
Since $(\sigma_1\otimes_\ell\sigma_F)[F\to\ell]=\sigma_1$ and similarly $(\sigma_2\otimes_\ell\sigma_F)[F\to\ell]=\sigma_2$, we thus get
\begin{align*}
  \Ord(\mathcal E) &=
  \tau_1[F\to\ell] \otimes_\ell \tau_2[F\to\ell] \\&=
  ((\sigma_1\otimes_\ell\sigma_F)[F\to\ell]) \otimes_\ell
  ((\sigma_2\otimes_\ell\overline{\sigma_F})[F\to\ell]) \\&=
  \sigma_1 \otimes_\ell \sigma_2=\sigma,
  \end{align*}
as order of half-edges on the outer face.
To show $\sigma\in\Ord_\mathcal{H}(G_{i+1}^+)$, it remains to show \cref{prop:fixed-rotations,prop:colored-faces}.
Both are satisfied in $\mathcal E'$ by the inductive hypothesis and in the $\sigma_2$-induced embedding of~$v_{i+1}$ by construction.
For their combination, we distinguish two cases depending on whether~$v_{i+1}$ is part of $H$ or not.
If $v_{i+1}\notin V(H)$, it must lie entirely within one face of $\mathcal H$.
Note that in this case, also all edges incident to $v_{i+1}$ are not in $H$ and thus must lie within this same face of~$\mathcal H$.
In particular, this holds for the edges in $F$.
If $f(e)=\bot$ for one $e\in F$, this holds for all edges incident to $v_{i+1}$ and \cref{prop:colored-faces} cannot be violated by any of the added edges.
Otherwise, \cref{prop:colored-faces} holding for $\mathcal E'$ already ensures that $e$ is embedded in face~$f(e)$.
As $v_{i+1}$ lies in the interior of the face $f(e)$, all its remaining edges are thus also embedded in the same, correct face, and \cref{prop:colored-faces} is satisfied.
As $v_{i+1}\notin V(H)$, adding it does not affect the restriction to~$H$ considered by \cref{prop:fixed-rotations}, which is thus also left satisfied.
Thus, all three properties are satisfied if $v_{i+1}\notin V(H)$.

Now consider the case $v_{i+1}\in V(H)$.
Note that all edges of $G-V(H)$ present in $\mathcal E'^\oplus$ still lie in the same face, leaving \cref{prop:colored-faces} unchanged.
Consider the newly-inserted half-edges of $G-V(H)$ that lie in a newly-created face incident to~$v_{i+1}$ as well as the border of $\mathcal E^\oplus$, but not to any vertex from~$V_i$.
For these edges, the order chosen by $\tau_2\in\Ord(S)$ ensures that \cref{prop:colored-faces} is satisfied.  
The remaining newly-inserted half-edges lie in one of the at most two faces incident to~$v_{i+1}$, (some vertices of)~$V_i$ and (two distinct segments of) the border of $\mathcal E^\oplus$, which we call \emph{boundary faces}.
Here, we distinguish whether $F$ contains an edge that is also in $H$.
If this is not the case, all edges of~$F$ lie in the same face of $\mathcal H$, which is also the single boundary face.
Note that for any edge, the faces incident to the left and right of its one end need to be the same as the faces incident to the left and right, respectively, of its other end.
This ensures that both $G_i^+ $ and $v_{i+1}$ agree on the face in which~$F$ should be embedded and \cref{prop:colored-faces} is satisfied.
If $F$ contains at least one edge that is also part of $H$, inserting $v_{i+1}$ may close some faces of $\mathcal H$.
Note that all edges contained in these faces satisfy \cref{prop:colored-faces} in $\mathcal E'^\oplus$ and also do so in $\mathcal E^\oplus$, where their incident segment of the border of $\mathcal E'^\oplus$ was effectively contracted into a single point.
These faces may contain no newly-inserted half-edges, and all old half-edges are completed to $v_{i+1}$.
In contrast to this, the up to two boundary faces may contain half-edges completed by $v_{i+1}$ as well as old half-edges of $G_i^+$ that were not yet completed and newly-inserted half-edges originating from $v_{i+1}$.
The boundary face is also incident to at least one edge that is both in $F$ and in $H$, which ensures that $v_{i+1}$ and~$G_i^+$ agree on the face in which to embed all these edges, satisfying \cref{prop:colored-faces}.

Regarding \cref{prop:fixed-rotations}, note that the construction of the rotation for $v_{i+1}$ ensures that the constraints of $\mathcal H$ are respected for this newly-inserted vertex.
The relative position of $v_{i+1}\in V(H)$ with regard to $G_i$ is only relevant if there are vertices $V_i\cap V(H)\neq\emptyset$ that are not connected to $v_{i+1}$ in $\mathcal H_i$.
As \cref{prop:fixed-rotations} holds for the embedding $\mathcal E'$ of $G_i$ and thereby also for all connected components of $H_i$ in it, it suffices to consider the case where $v_{i+1}$ is a new connected component, i.e., $F$ contains no edges of~$H$.
In this case, all edges in $F$ are restricted to be embedded in the same face, which is incident to $v_{i+1}$ and some vertices from $V_i$.
Hence, in this case \cref{prop:colored-faces} ensures the correct relative positions and thus \cref{prop:fixed-rotations}.
Otherwise, that is if $G_i$ contains no vertices of $H$ or if $F$ contains a fixed edge, the relative position of $v_{i+1}$ cannot violate \cref{prop:fixed-rotations}.
As all three properties are satisfied, we get $\sigma\in\Ord_\mathcal{H}(G_{i+1}^+)$ for both $v_{i+1}\notin V(H)$ and $v_{i+1}\in V(H)$.
This concludes the proof of $\Ord_\mathcal{H}(G_{i+1}^+) \supseteq \Ord(T_{i+1})$.
\end{prooflater}

Interestingly, the interactions between the two halves of a split C-node, which we need to explicitly handle by fixing the one half if the other one is implicitly fixed, are also one of the main concerns of the algorithm by Chiba et al.\ \cite{cnao-ala-85} for generating an embedding in addition to testing planarity.
Hence, after successfully running our \peplan testing algorithm, an embedding can be generated using the same approach.

\subsection{Non-Biconnected Instances}\label{sec:peplan-nonbic}
The ordinary planarity test by Booth and Lueker can be applied to non-biconnected graphs by simply processing each biconnected component independently.
This approach unfortunately cannot directly be applied for \peplan, as we also need to account for the constraints of cut-vertices.
Instead of relying on an involved preprocessing step, we extend our testing algorithm to directly handle non-biconnected inputs using the generalized planarity test by Haeupler and Tarjan \cite{ht-pav-08}, which we describe in the following.

When applying the planarity test to a non-biconnected instance, we can no longer assume $v_1,\dots,v_n$ to be an st-ordering that ensures that both $G[V_i]$ and $G[V\setminus V_i]$ are connected for every $i=1,\ldots,n$.
Haeupler and Tarjan retain the property that at least $G[V\setminus V_i]$ is connected by using a leaf-to-root ordering of a DFS-tree \cite{ht-pav-08}.
Thus, one can still assume all half-embedded edges to lie on the outer face.
But, at every step of the algorithm, we may now have multiple distinct connected components in~$G_i^+$, each represented by their own PC-tree.
When inserting a next vertex $v_{i+1}$, this may now cause previously distinct connected components to merge.
Note that this may happen independently of whether $v_{i+1}$ is a cut-vertex in $G$ whenever $v_{i+1}$ separates multiple subtrees of the DFS-tree.
The generalized algorithm handles this case by incrementally merging the components $C_1,\ldots,C_k$ of $G_i^+$ that are adjacent to~$v_{i+1}$ into the tree~$S_0$ representing~$v_{i+1}$.
To combine the \pc \TCj[j+1] of the next component $C_{j+1}$ with the current tree $\Sj$ into the next tree $\Sj[j+1]$,
we can use the same process as we used for combining $T_i$ with $S$ into $T_{i+1}$ in the biconnected setting.
The final tree $\Sj[k]$ then represents the component of $v_{i+1}$ in $G_{i+1}^+$.

There is a second issue that needs consideration in the non-biconnected partial setting.
Even if the instance is positive, an update may now fail if the constraints of $v_{i+1}$ require us to nest some incident \emph{blocks} (i.e., maximal biconnected components) and we process an outer block before the nested one.
Such nesting may be enforced by the fixed order of edges of $\mathcal H$ or by the color-constraints around $v_{i+1}$\iftoggle{long}{; see \Cref{fig:pe-blocks}}{}.
Fortunately, if a block needs to be nested within another block, it may have no further half-embedded edges for the instance to be positive.
Thus, processing any nested block before the block is nested within, that is using an inside-out nesting order, ensures any nested blocks are processed and thereby entirely removed first and the edges of an outer blocks can afterwards be made consecutive.

\iftoggle{long}{\Cref{lem:peplan-nonbic} in \Cref{sec:peplan-nonbic-details} shows}{In the full version \cite{}, we show} that incorporating these two changes breaks no assumptions we made in the biconnected setting while being sufficient to handle non-biconnected instances.
There, we also present pseudo-code for the full \peplan test for not-necessarily biconnected instances.
Altogether, this yields the following theorem.
\begin{restatable}\restateref{thm:main}{theorem}{thmMain}
  \label{thm:main}
  An instance $(G,H,\mathcal H)$ of \peplan can be tested in time linear in the size of $G$.
\end{restatable}

\section{Summary}\label{sec:summary}
\enlargethispage{1em}
This concludes the exposition of the core points of our algorithm for testing \peplan.
The remainder of this work presents the technical details that are required for the different parts of our algorithm, including full proofs of correctness.
\Cref{sec:pctree-details} gives an in-depth background on the operations of the base PC-tree data structure and describes our modifications to the \Update in the color-constrained case in detail.
In \Cref{sec:peplan-details}, we present the full proof of correctness for biconnected \peplan instances.
In \Cref{sec:peplan-nonbic-details}, we elaborate on our extension to not-necessarily biconnected instances. Finally, we describe the technical details required for a linear-time implementation in \Cref{sec:linear-time}.
Concluding remarks can be found in \Cref{sec:conclusion}.\iftoggle{long}{}{\todo{merge with summary}}
\toggletrue{myapx}

\iftoggle{long}{
\section{Full Details on PC-Tree Operations}\label{sec:pctree-details}
\pcOps

The following lemma describes the situations in which a PC-tree \Merge is the converse operation of a previous \Split and how this affects orders derived from the trees.
Recall that orders $\sigma_1\in \Ord(T')$ and $\sigma_2\in \Ord(T'')$ are compatible if $\merge{\sigma_1}{\sigma_2}\in\Ord(T)$ with $T'$ and $T''$ obtained from splitting $T$.
\lemMergeSplits*\label{lem:merge-splits*}
\proofMergeSplits

\pcUpdate

\section{Partially Embedded Planarity}\label{sec:peplan-details}
In this section, we want to give full and detailed proofs of correctness for our \peplan algorithms from \Cref{sec:peplan}.
First, we will again only consider the biconnected setting.
The generalization to not-necessarily biconnected graphs is considered in \Cref{sec:peplan-nonbic-details}.
\Cref{alg:peplan-bicon} contains the full pseudo-code for our solution to \peplan on biconnected graphs as described in \Cref{sec:peplan-bicon}.
See \Cref{alg:hl-bicon-plan} for a high-level comparison with the basic planarity test.

In our algorithm and especially its following proof of correctness, we want to make explicit the usages of three assumptions regarding the trees generated by the algorithm.
\begin{enumerate}[(T1)]
\item If the instance is positive, for each step $i$, the leaves $F$ can be made consecutive in $T_i$ as well as in~$S$. \label[assumption]{asm:consecutive}
\item The graph $G_i^+$ that $T_i$ represents is connected. \label[assumption]{asm:connected}
\item The leaves $F$ are all adjacent to the same P-node in $S$. \label[assumption]{asm:p-node}
\end{enumerate}
Observe that these three assumptions are trivially satisfied in the biconnected case we currently investigate.
For the non-biconnected case we consider in the next section, we will need to take more care to show that these assumption still hold.
Furthermore note that we have $L(T_n)\setminus F = \emptyset$ in the last step of our algorithm, a situation which will also appear more often throughout the algorithm in the non-biconnected case.
This no problem though, as we will never assume $F^c\neq\emptyset$ in our algorithm or its proof of correctness.

\begin{algorithm}[t]
$v_1, \ldots, v_n\leftarrow\texttt{st-Order}(G)$\;
$\mathcal T_1\leftarrow \texttt{vertexToPNode}(v_1)$\tcp*[l]{single P-node copying constraints~of~$v_1$}
\For{$i$ in $1, \ldots, n-1$}{
  $S\leftarrow \texttt{vertexToPNode}(v_{i+1})$\;$F\leftarrow$ edges between $G_i$ and $v_{i+1}$ in $G$\;

  $\molap{S'}{S^F} \leftarrow\usplit{S}{F}[\molap{F}{F^c}]$\;$           S^F  \leftarrow\usplit{S}{F}[          F^c ]$\;

  $\molap{T'}{T^F} \leftarrow\usplit{T_i}{F}[\molap{F}{F^c}]$\;$           T^F  \leftarrow\usplit{T_i}{F}[          F^c ]$\;

  \If{$\Ord(T^F)\cap\overline{\Ord(S^F)} = \emptyset$}{\Return \texttt{false}\;}

  \eIf{
    $S^F$ fixes C-node $\mu^F$ of $T^F$ in the intersection
  }{
    $T''\leftarrow T'$ with fixed respective C-node $\mu'$\;
  }{
    $T''\leftarrow T'$\;
  }
  $T_{i+1}\leftarrow \merge{S'}{T''}$\;
}
\Return \texttt{true} if no \Update returned the null tree, and \texttt{false} otherwise\;
\caption{
  Test a biconnected graph $G$ for \peplan.
}
\label{alg:peplan-bicon}
\end{algorithm}

\lemPeplanBicon*\label{lem:peplan-bicon*}
\proofPeplanBicon

\subsection{Non-Biconnected Instances}\label{sec:peplan-nonbic-details}
We now want to consider not-necessarily biconnected instances, for which \Cref{sec:peplan-nonbic} already sketched the changes that are neccesary.
In this section, we give an elaborate description of these changes together and show their correctness.
A detailed pseudo-code implementation of our overall algorithm is given in \Cref{alg:peplan-nonbic}.

\begin{algorithm}[tb]
  $v_1, \ldots, v_n\leftarrow\texttt{vertexOrder}(G)$\tcp*[l]{leaf-to-root DFS-tree order}
  $\mathcal T\leftarrow$ empty map from connected component to PC-tree\;
  $\mathcal T[\{v_1\}]\leftarrow \texttt{vertexToPNode}(v_1)$\tcp*[l]{single P-node copying constraints~of~$v_1$}
  \For{$i$ in $1, \ldots, n-1$}{
    $\Sj[0]\leftarrow\texttt{vertexToPNode}(v_{i+1})$\;
    $C_1, \ldots, C_k\leftarrow\texttt{blockOrder}(v_{i+1})$\label{line:blockOrder}\tcp*[l]{process nested blocks first}
    \For{$j$ in $1, \ldots, k$}{
      $F_j\leftarrow$ edges between $C_j$ and $v_{i+1}$\;
      $\molap{\Sj'}{\Sj^F} \leftarrow\usplit{\Sj[j-1]}{F_j}[\molap{F_j}{F_j^c}]$\;$             \Sj^F  \leftarrow\usplit{\Sj[j-1]}{F_j}[            F_j^c ]$\;
      \eIf{$F_j \neq L(\TCj)$}{
        $\molap{T'}{T^F} \leftarrow\usplit{\TCj}{F_j}[\molap{F_j}{F_j^c}]$\;$           T^F  \leftarrow\usplit{\TCj}{F_j}[            F_j^c ]$\;
        \If{$\Ord(T^F)\cap\overline{\Ord(\Sj^F)} = \emptyset$\label{line:intersect2}}{\Return \texttt{false}}
        \eIf{
          $\Sj^F$ fixes C-node $\mu^F$ of $T^F$ in the intersection
        }{
          $T''\leftarrow T'$ with fixed respective C-node $\mu'$\;
        }{
          $T''\leftarrow T'$\;
        }
        $\Sj\leftarrow \merge{\Sj'}{T''}$\;\label{line:merge}
      }{
        remove the common leaf $e_j$ from both $\Sj^F$ and $\Sj'$\;\label{line:rem-block}
        \lIf{$\Ord(\TCj)\cap\overline{\Ord(\Sj^F)} = \emptyset$}{\Return \texttt{false}}
        $\Sj\leftarrow \Sj'$\tcp*[l]{without $e_j$}
      }
    }
    $\mathcal T[\{v_i\}\cup C_1\cup\ldots\cup C_k]\leftarrow \Sj[k]$\;
  }
  \Return \texttt{true} if no \Update returned the null tree, and \texttt{false} otherwise\;

  \caption{
    Test a general (i.e., not-necessarily biconnected) graph $G$ for \peplan.
  }
  \label{alg:peplan-nonbic}
\end{algorithm}

Recall that, on not-necessarily biconnected instances, we can no longer use an st-ordering, but will instead use an leaf-to-root ordering of a DFS-tree as proposed by Haeupler and Tarjan \cite{ht-pav-08}.
This still ensures that $G[V\setminus V_i]$ is connected and all half-edges have to lie on the outer face, but we may now have multiple distinct connected components in~$G_i^+$, each represented by their own PC-tree.
When inserting a next vertex $v_{i+1}$, this may now cause previously distinct connected components to merge.
We handle this case by incrementally merging the components $C_1,\ldots,C_k$ of $G_i^+$ that are adjacent to~$v_{i+1}$ as follows.
We consider $v_{i+1}$ as initial component $C_0=\{v_{i+1}\}$ and observe that the union of all components yields the component $C=C_0\cup C_1\cup\ldots\cup C_k$ of $v_{i+1}$ in $G_{i+1}^+$.
We will compute \pcs $\Sj[1],\ldots,\Sj[k]$ satisfying the invariant $\Ord(\Sj) = \Ord_\mathcal{H}(G[C_0 \cup \cdots \cup C_j]^+)$ for $j=1,\ldots,k$; see the lines following line \ref{line:blockOrder} in \Cref{alg:peplan-nonbic}.
Note that we will thereby get $\Ord(\Sj[k]) = \Ord_\mathcal{H}(G[C_0 \cup \cdots \cup C_k]^+) = \Ord_\mathcal{H}(C)$ at the end of the iteration.
We obtain tree $\Sj[0]$ with $\Ord(\Sj[0]) = \Ord_\mathcal{H}(G[C_0]^+)$ similar to before by converting $v_{i+1}$ into a single P-node and copying its constraints.
We use $\mathcal T$ to map from components to their already computed \pcs, setting $\mathcal T[C]=\Sj[k]$ every time we have processed all $k$ components incident to a vertex~$v_{i+1}$.
To combine the \pc $\TCj$ of the next component $C_j$ with the tree $\Sj[j-1]$ from the previous iteration into the next tree \Sj,
we use the same process as we used for combining $T_i$ with $S$ into $T_{i+1}$ in our test for biconnected instances; see line \ref{line:merge} of \Cref{alg:peplan-nonbic}.
Note that while we now run the process on different trees, we will show in a moment that they still satisfy \Cref{asm:connected,asm:p-node}, i.e., that the component $C_j$ that $\TCj$ represents is connected and the leaves $F$ are all adjacent to the same P-node in $\Sj$.

\begin{figure}[t]
  \begin{subfigure}{.39\linewidth}
    \includegraphics[page=3,scale=1.3]{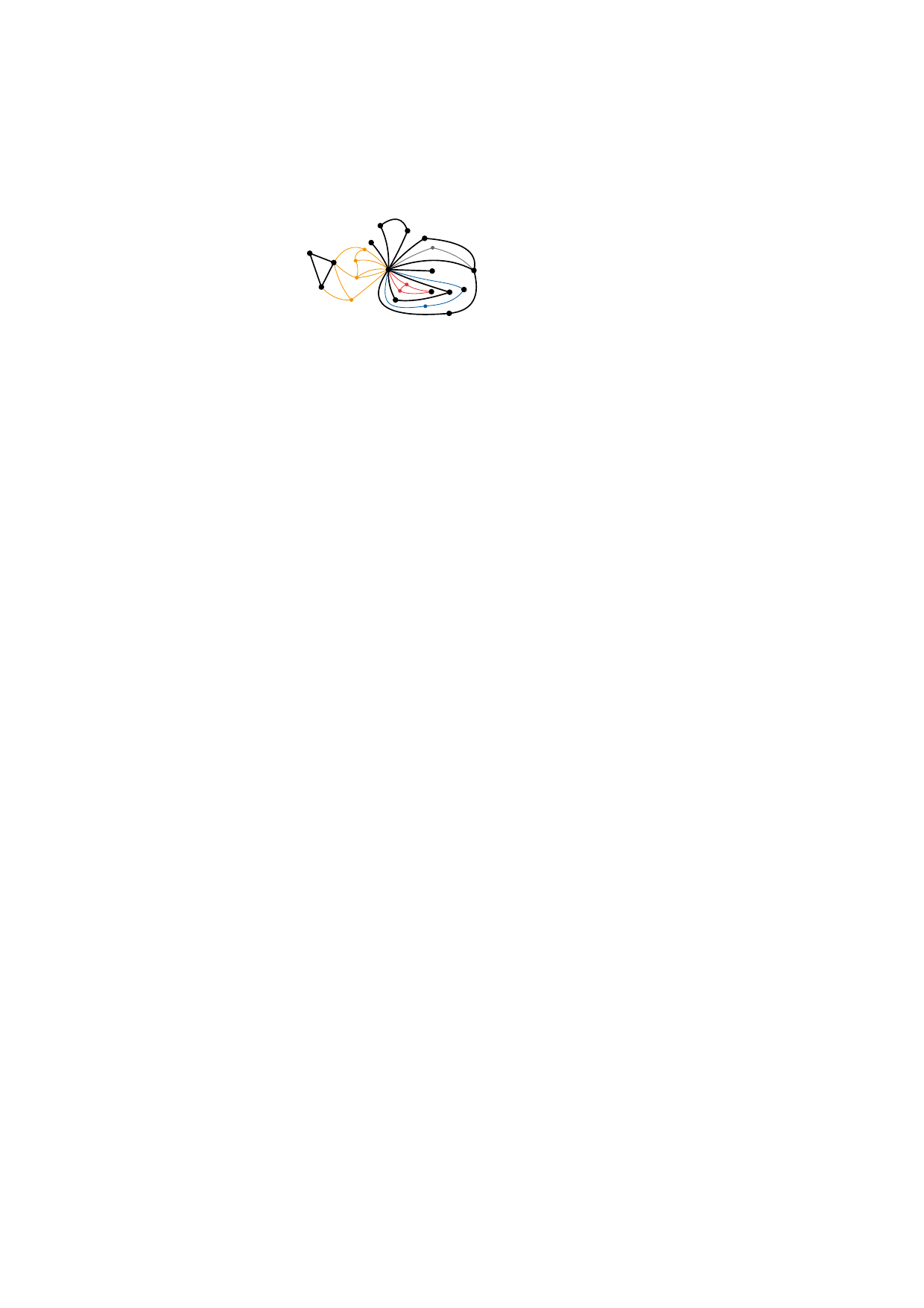}
  \end{subfigure}\begin{subfigure}{.37\linewidth}
    \centering
    \includegraphics[page=2,scale=1.3]{fig/blocks}
  \end{subfigure}\begin{subfigure}{.24\linewidth}
    \hspace*{\fill}
    \includegraphics[page=4,scale=1.2]{fig/blocks}
  \end{subfigure}
  \caption{
    \textbf{(a)} An instance of \peplan with $H$-bridges colored according to the face they have to be embedded in.
    \textbf{(b)} A DFS tree on the underlying graph~$G$ with tree-edges directed away from the root $r$.
    \textbf{(c)} The \pc $S_0$ of vertex $v$, also indicating the directions of incident edges. The numbers indicate an order in which the incident blocks can be processed.
    Due to the restricted edges, blocks 3 and 2 need to be removed before blocks 7 and 8, respectively, can be made consecutive.
    Due to the fixed edges, block 7 needs to be processed before block 8.
Note that the currently shown rotation does not have block 2 consecutive.
  }
  \label{fig:pe-blocks}
\end{figure}

It is \Cref{asm:consecutive}, that is that the leaves $F$ can always be made consecutive in $\TCj$ and $\Sj$ if the instance is positive, which still needs consideration.
Even if the instance is positive, the update may now fail if the constraints of $v_{i+1}$ require us to nest some incident blocks and we process an outer block before the nested one; see \Cref{fig:pe-blocks}.
Fortunately, we can circumvent this issue as the nested blocks in positive instances need to have a certain structure.
Consider a cut-vertex $v_{i+1}$ with an incident block $C_j$.
When processing $C_j$, the component has no further half-edges except those leading to cut-vertex $v_{i+1}$, i.e., we have $F_j = L(\TCj)$.
We add no new half-edges to $\Sj[j-1]$ and remove all leaves $F_j$ without replacement after making them consecutive.
Thus, no part of $\TCj$ is present in $\Sj$ and we have $L(\Sj)\subsetneq L(\Sj[j-1])$; see line \ref{line:rem-block} in \Cref{alg:peplan-nonbic}.
Recall that we never assumed $F^c$ to be non-empty during the proof of \Cref{lem:peplan-bicon}, thus this does not affect correctness.

If a block now needs to be nested at $v_{i+1}$, it may not have further half-edges except for those leading to $v_{i+1}$ for the instance to be positive.
Thus, processing any nested block before the block is nested within, that is using an inside-out nesting order, ensures any nested blocks are always processed and removed first and the edges of their containing blocks can afterwards be made consecutive.
\Cref{lem:lt-blocks} in \Cref{sec:linear-time} shows how such an order can be found in time linear in the degree of $v_{i+1}$.

\begin{lemma}\label{lem:peplan-nonbic}
Let $C_1,\ldots,C_k$ be the connected components of $G_i^+$ that are incident to $v_{i+1}$, ordered according to their inside-out nesting enforced by the constraint of~$v_{i+1}$.
For every step $j\in\{0,\ldots,k\}$, $\Ord(T_i^j) = \Ord_\mathcal{H}(G[C_0 \cup \cdots \cup C_j]^+)$ holds.
\end{lemma}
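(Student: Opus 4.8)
The plan is to prove the invariant by induction on $j$, using \Cref{lem:peplan-bicon} to carry almost the entire argument and isolating only the genuinely new phenomena of the non-biconnected setting. For the base case $j=0$, the tree $\Sj[0]$ is the color-constrained P-node obtained from $v_{i+1}$, so $\Ord(\Sj[0])$ is by construction the set of valid rotations of $v_{i+1}$, which equals $\Ord_\mathcal{H}(G[C_0]^+)$ since $G[C_0]^+$ is just $v_{i+1}$ together with its incident half-edges. For the inductive step I would assume $\Ord(\Sj[j-1]) = \Ord_\mathcal{H}(G[C_0\cup\cdots\cup C_{j-1}]^+)$ and additionally use that $\TCj$, having been computed in an earlier iteration, satisfies $\Ord(\TCj) = \Ord_\mathcal{H}(G[C_j]^+)$ by the outer induction. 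The key observation tying everything back to the biconnected case is that $G[C_0\cup\cdots\cup C_j]^+$ arises from $G[C_0\cup\cdots\cup C_{j-1}]^+$ and $G[C_j]^+$ by identifying the shared cut-vertex $v_{i+1}$ and matching the edges $F_j$, which occur as half-edges on both sides, into internal edges. This is exactly the gluing realized in \Cref{lem:peplan-bicon}, with $\TCj$ in the role of $T_i$ and $\Sj[j-1]$ in the role of $S$.

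I would then split along the two branches of \Cref{alg:peplan-nonbic}. In the case $F_j \neq L(\TCj)$, the component $C_j$ keeps half-edges beyond those to $v_{i+1}$ and the algorithm performs the same split--intersect--fix--merge sequence as in the biconnected test; both inclusions of the invariant then go through verbatim as in \Cref{lem:peplan-bicon} (including the appeal to \Cref{lem:merge-splits} for the flip of a split C-node) once I verify that the three hypotheses used there remain valid. Concretely, \cref{asm:connected} holds because $C_j$ is a connected component, and \cref{asm:p-node} holds because all of $F_j$ stays incident to the persistent P-node of $v_{i+1}$ in $\Sj[j-1]$: the earlier merges only substituted the bundles $F_1,\ldots,F_{j-1}$ and never touched the edges to $C_j$. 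In the case $F_j = L(\TCj)$, the component $C_j$ is a pendant block all of whose half-edges lead to $v_{i+1}$, so after insertion the edges $F_j$ become internal and $C_j$ contributes nothing to the outer face. Here $\Ord_\mathcal{H}(G[C_0\cup\cdots\cup C_j]^+)$ is obtained from $\Ord_\mathcal{H}(G[C_0\cup\cdots\cup C_{j-1}]^+)$ by keeping exactly those orders in which $F_j$ is consecutive and orientable so as to match an admissible order of $\TCj$, and then deleting $F_j$; this is precisely what the algorithm computes when it makes $F_j$ consecutive, tests $\Ord(\TCj)\cap\overline{\Ord(\Sj^F)}\neq\emptyset$ (using $L(\TCj)=F_j$), and sets $\Sj = \Sj'$ with the dummy leaf removed. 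I would make both inclusions rigorous by the same merge-at-a-vertex correspondence as before, the only change being that the $C_j$-side tree is fully absorbed rather than merged back.

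The main obstacle is \cref{asm:consecutive}, i.e.\ showing that for a positive instance the update $\update{\Sj[j-1]}{F_j}$ never fails; this is why the ordering of $C_1,\ldots,C_k$ matters. I would argue contrapositively: if $F_j$ cannot be made consecutive in $\Sj[j-1]$, then, since the instance admits a global embedding, the obstruction must be order-dependent, namely a block that is forced by the constraints of $v_{i+1}$ to nest inside the face into which $F_j$ is routed and whose attachment edges are consequently driven into the interior of $F_j$. Any such nested block can have no half-edges other than those to $v_{i+1}$ (a further half-edge would have to leave the enclosing face, contradicting planarity of the solution), so it is handled in the second branch of \Cref{alg:peplan-nonbic} and, by the inside-out order, precedes $C_j$ and has therefore already been fully absorbed and removed from $\Sj[j-1]$. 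Hence no enclosing block remains to interrupt $F_j$, and the update succeeds. I would finally invoke \Cref{lem:lt-blocks} to ensure that such an inside-out order exists and can be computed in time linear in $\deg(v_{i+1})$; combining the per-component invariant at $j=k$ then yields $\Ord(\Sj[k]) = \Ord_\mathcal{H}(G[C_0\cup\cdots\cup C_k]^+)$ as required for the outer induction on the whole algorithm.
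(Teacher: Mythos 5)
Your proposal is correct and follows essentially the same route as the paper's proof: reduce to \Cref{lem:peplan-bicon} by verifying \Cref{asm:consecutive,asm:connected,asm:p-node} for the trees $\TCj$ and $\Sj[j-1]$, with the inside-out block order guaranteeing \cref{asm:consecutive} (nested blocks have no other half-edges, so they are absorbed and removed before the enclosing blocks are processed) and the untouched-leaves argument guaranteeing \cref{asm:p-node}. Your additional explicit treatment of the $F_j = L(\TCj)$ branch and the contrapositive phrasing of the consecutivity argument are only expository variations of what the paper states in the surrounding text and proof.
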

\begin{proof}
The correctness of the statement can be shown analogously to the correctness of \Cref{lem:peplan-bicon}.
To be able to apply this proof, we still to show that its three underlying \Cref{asm:consecutive,asm:connected,asm:p-node} still hold for the new trees we use.
Processing the blocks in an inside-out nesting order ensures that, in a yes-instance, blocks that need to be nested are processed before the blocks they are nested within and this process removes all half-edges to the nested blocks.
This ensures that the half-edges to the outer blocks, which come later in the block order, can also be made consecutive, that is \Cref{asm:consecutive} is fulfilled.
As the component $C_{j+1}$ that $\TCj[j+1]$ represents is a block incident to $v_{i+1}$, it is also connected and thus $\TCj[j+1]$ satisfies~\Cref{asm:connected}.
The fact that the leaves $F$ are all adjacent to the same P-node in $\Sj$, that is that $\Sj$ satisfies \Cref{asm:p-node}, can be shown as follows.
The assumption holds per construction for \Sj[0].
For any later step $j$ with tree \Sj, note that the leaves $F$ we make consecutive were already present in \Sj[0], but they where never part of a set we made consecutive in an earlier step.
As the PC-tree update only modifies leaves that are made consecutive, the leaves in $F$ are thus all still incident to the same P-node they were incident to in \Sj[0].
\end{proof}

\section{Linear-Time Implementation}\label{sec:linear-time}
In this section, we show how the different parts of our algorithm for testing \peplan can be implemented to run in linear time.
We assume the usual representation of a graph using doubly-linked adjacency lists.
We further assume each vertex and edge has a label whether it is contained in $H$ and that the rotation system of $\mathcal H$ is given as separate doubly-linked adjacency lists.
Additionally, each edge~$e$ of~$H$ has pointers to objects representing the incident faces at both sides, which we will use as values of $f_v(e)$ and $f_u(e)$ for the endpoints $u,v$ of $e$.
Conversely, the face objects have, for each connected component incident to the face, a pointer to one of their incident edges in the component.
Note that we assume that this data structure represents a planar embedding, i.e., we have no cyclically nested faces and~components.

For our \pcs we assume a suitable implementation of the underlying PC-tree data structure, which especially allows merging C-nodes in constant time as required for an amortized-linear \Update \cite{hm-pta-03,hm-ptp-04,fpr-eco-21}.
For a linear-time implementation of \Update, the PC-trees need to be rooted.
As a consequence, \Merge can only be performed in constant time if the leaf~$\ell$ at which we merge is (incident to) the root of at least one of the two trees.
Fortunately, the planarity test of Haeupler and Tarjan, on which our algorithm is based, ensures this property \cite{ht-pav-08}.
Note that, for example, the graph data structures from the OGDF \cite{cgj-tog-13} together with the PC-tree implementation from \cite{fpr-eco-21} provide a C++ implementation suitable for our purposes.
An implementation of the Haeupler-Tarjan planarity test together with an embedder already exists based on these libraries \cite[Chapter~8.4]{fin-cpa-24}.

Similar to our graph representation, we use a second doubly-linked list to store fixed edges and their cyclic order in the \pcs.
To keep track of the colors of the angles following fixed edges and of the restricted edges around a node, we equip each fixed edge (representing its following angle) and restricted edge with a pointer to a shared counter for their node.
All objects of the same color at the same node have a pointer to the same counter, which separately counts angles and edges at this node.
Note that we do not maintain an index of the colors present at a node, but only an unordered list of all counters present at the node.
A counter of a certain color can thus only be accessed by linearly searching through the counter list, or in constant time via an object of the appropriate color.
This is no problem though, as this structure now easily allows decrementing the respective counter when removing an angle or edge from a node.
To create new nodes (e.g. when splitting), we keep a single global array with one entry per color that temporary allows looking up counters by color (and subsequently incrementing them appropriately), but only for one node at a time, i.e., the currently created one.
After the node is created, we reset the global array in time linear in the degree of the created node.
The counters now allow us to recognize the case when removing the last angle or edge of a color from a node, while not negatively affecting the asymptotic running time of the \Update operation.

\begin{lemma}\label{lem:lt-update}
  Method \Update on \pcs can be implemented to run in amortized time linear in $|A|$.
\end{lemma}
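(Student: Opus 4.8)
The plan is to charge all color-constraint bookkeeping against the work already performed by the ordinary PC-tree \Update, which is known to run in amortized time linear in $|A|$~\cite{hm-pta-03,hm-ptp-04,fpr-eco-21}. Recall that this bound holds because each of Steps~1--6 costs time proportional to $|A|$ plus the length of the terminal path, and all terminal-path edges are destroyed by the final contraction, so the accumulated path length amortizes against removed edges. Since our modified update (\Cref{alg:rcpc-update}) performs exactly the same structural operations on the tree as shown for correctness in \Cref{lem:ccpc-update}, it suffices to show that the additional handling of fixed orders and color-constraints adds only $O(|A| + \text{path length})$ work per invocation. The crucial invariant I would maintain is that we only ever inspect the \emph{full} edges (of which there are $O(|A|)$ in total) and the at most two terminal edges incident to each terminal-path node, and never the potentially many \emph{empty} edges around a high-degree node.

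For \textsc{CheckFlips} (Step~3, \Cref{alg:rcpc-flips}) I would, at each fixed C-node and each color-constrained P-node on the terminal path, verify that its full edges together with its at most two terminal edges are consecutive in the stored fixed cyclic order. As the fixed edges are kept in a doubly-linked cyclic list, this check can be carried out by walking outward from the block of full edges, touching only those full and terminal edges; summed over the terminal path this is $O(|A| + \text{path length})$. For the splitting in Step~4, the per-color counters stored at each node (one shared counter per color, separately counting angles and restricted edges, as described in \Cref{sec:linear-time}) are the key device: when we detach the full edges, and the \emph{complementary} terminal edges rather than the many empty edges, from a P-node $\mu$ to form $\mu^F$, $\mu^E$, and the middle node, we decrement the angle- and edge-counters of the affected colors, each decrement in $O(1)$. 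Detecting that the last angle of some color is removed while a restricted edge of that color still remains — the signal for an impossible restriction — is then immediate. Assigning the color-constraints to the new separating edges and deciding whether the middle node becomes an ordinary or a fixed C-node likewise only consults these counters, at $O(1)$ per touched edge.

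Creating the new nodes is where the main obstacle lies: a terminal-path node may have arbitrarily large degree, so we cannot afford to rebuild its counter list by scanning all incident edges. I would resolve this exactly as sketched in \Cref{sec:linear-time}, using a single global array indexed by color that lets us look up and increment the counter of a given color in $O(1)$ while one new node is being assembled, and then resetting only the entries we touched, in time linear in the degree of the new node — which is bounded by the full and terminal edges moved to it. Finally, Steps~5 and~6 contract the terminal path into a single C-node; propagating the one \emph{fixed} bit (the result is fixed iff any constituent C-node was fixed) costs $O(1)$ per contracted node, hence $O(\text{path length})$ overall. Summing over all steps, the color-constraint overhead is $O(|A| + \text{path length})$, and since the terminal-path length amortizes against the destroyed edges exactly as in the ordinary update, the whole operation runs in amortized $O(|A|)$ time. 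The point I would be most careful about throughout is never iterating over the unaffected incident edges of a high-degree node, which is guaranteed by always manipulating the complementary terminal edges instead of the empty side and by the degree-bounded resetting of the global color array.
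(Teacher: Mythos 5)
Your proposal is correct and follows essentially the same route as the paper's own proof: charging the constraint bookkeeping against the ordinary update's amortized analysis, using the per-color angle/edge counters and the global color array (reset in time linear in the new node's degree) for the splits, splitting off the complementary terminal edges instead of the empty side, and propagating the fixed bit in $O(1)$ per contraction. The only cosmetic difference is that you verify consecutivity in \textsc{CheckFlips} by walking outward from the block of full fixed edges, whereas the paper checks the predecessor and successor of each full fixed edge in its linked list — the two checks are equivalent in cost.
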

\begin{proof}
  As the base update procedure for ordinary PC-trees (see \Cref{alg:rcpc-update}) has an amortized running time linear in the number of full leaves \cite{hm-pta-03,hm-ptp-04,fpr-eco-21}, we want to show the same also holds for our modified version.
  Note that to meet the linear time bound, we can only spend a linear amount of time on the full neighbors for each full or partial node, while me may not process all their empty neighbors.

  Recall that we leave the labeling and terminal path finding in step \ref{pc-upd-labeling} unchanged.
  The consecutivity check of step \ref{pc-upd-reorder} (see also \Cref{alg:rcpc-flips}) can be implemented by keeping, for each P-node, a linked list of full fixed edges, and checking the predecessor and successor of every such edge after the labeling is complete.
  For each list, at most one edge may have a non-full predecessor and at most one a non-full successor for the full fixed edges to be consecutive.
  This also allows us to check that both of the at most two terminal fixed edges are directly adjacent to this block of full fixed edges, or the other partial fixed edge if the block is empty.
  Similarly, we can check that all P-nodes and fixed C-nodes have their fixed full edges on the same side of the terminal path.

  In step \ref{pc-upd-split}, we initialize and update the angle and restricted edge color counters appropriately during the splits.
  These counters then allow us to efficiently detect when a restricted edge got separated from all angles it could be embedded in or whether one of the split halves received no restricted edges.
  The second split that separates all empty edges is equivalent to splitting off the at most 2 partial neighbors together with the edge leading to the newly-created P-node with all full neighbors, and can thus be implemented in constant time without processing empty edges.
  After both splits, checking the admissible orders of $\mu^M$ and changing its type appropriately can be done in constant time as it has degree at most 4.
  The contractions in step \ref{pc-upd-contract} can then, thanks to the constant-time \Merge of C-nodes, be done in time linear in the length of the terminal~path.
  This shows that all our modifications do not increase the asymptotic running time of the \Update on \pcs.
\end{proof}

In addition to \Update, we also use a restricted form of the \Intersect method in our planarity test.
Fortunately, this test for a non-empty intersection can easily be implemented in linear time.

\begin{lemma}\label{lem:lt-intersect}
  The test whether $\Ord(T^F) \cap \overline{\Ord(S^F)} \ne \emptyset$ can be implemented to run in time linear in the number of leaves of $S^F$ and $T^F$.
\end{lemma}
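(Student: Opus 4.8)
The plan is to exploit the special structure of $S^F$: by \cref{asm:p-node} the leaves $F$ are all incident to a single color-constrained P-node of $S$ (resp.\ $\Sj[j-1]$ in the non-biconnected case), so splitting off $F^c$ yields a tree $S^F$ that consists of a \emph{single} color-constrained P-node $\nu$ on the leaf set $F\cup\{\ell\}$, where $\ell$ stands in for $F^c$. Consequently $\Ord(S^F)$ is exactly the set of valid cyclic orders of the color-constraint $C_\nu$, and $\overline{\Ord(S^F)}$ is the valid-order set of the color-constraint obtained by reversing the fixed order $\rho_\nu$. The whole test therefore reduces to deciding whether the arbitrary \pc $T^F$ admits an admissible order that in addition satisfies this single reversed color-constraint.

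First I would handle the purely structural part. The fixed edges $F_\nu$ of $\nu$ carry the fixed cyclic order $\overline{\rho_\nu}$, and I would enforce exactly this order inside $T^F$ by processing $\nu$'s constraint as in Booth's intersection of ordinary PC-trees~\cite{boo-pta-75,pfr-alt-20}: read the consecutivity (and, for the fixed edges, fixed-order) requirements off the single node $\nu$ and impose them on $T^F$ using the \Update machinery, returning ``empty'' as soon as an \Update yields the null-tree. The only addition over the ordinary intersection is that, whenever a requirement meets a C-node, a fixed C-node, or a color-constrained P-node of $T^F$, I check that the two nodes admit at least one \emph{common} fixed order of their shared edges, comparing the at most two candidate rotations of each node in time proportional to the number of shared edges; contraction (discussed below) makes these contributions sum to $O(|F|)$.

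Second, I would treat the color constraints jointly. Both $\nu$ and the color-constrained P-nodes of $T^F$ prescribe, for each restricted edge of $F$, a set of admissible colored angles, so a common order exists only if every restricted edge can be placed in an angle whose color is admitted on both sides simultaneously. I would verify this using the per-node color counters introduced for \Cref{lem:lt-update}, which for each color record the number of matching angles and restricted edges at a node and are maintained in amortized constant time; a restricted edge left without a compatible angle again reports an empty intersection. Correctness then follows because the checks are precisely the defining conditions of $\Ord(T^F)$ and of $\overline{\Ord(S^F)}$: an order survives all of them if and only if it lies in both sets.

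The hard part will be keeping the running time linear in $|F|$ rather than quadratic, since naively enforcing the full cyclic order of the fixed edges of $\nu$ could trigger many overlapping \Update calls. I would resolve this exactly as Booth does for ordinary PC-trees, by contracting maximal subtrees that are already consecutive and correctly ordered in both trees into single nodes, so that every leaf and every tree edge is touched only a constant number of times. The single-node form of $S^F$ guarantees that the list of constraints to impose is read off in one pass over $\nu$, and the amortized-linear \Update of \Cref{lem:lt-update} together with the color counters then bounds the total work by $O(|F|)$, that is, linear in the number of leaves of $S^F$ and $T^F$.
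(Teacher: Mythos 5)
Your overall architecture matches the paper's: exploit \cref{asm:p-node} to view $S^F$ as a single color-constrained P-node, and reduce the emptiness test to checking whether the (reversed) fixed order of that node is admissible in $T^F$ via an intersection-style linear-time procedure. However, there is a genuine gap in your second step, the ``joint'' color-constraint check. You verify only a local, per-node condition — that every restricted edge sees an angle of a compatible color at both of the nodes it is attached to — and then assert that ``an order survives all of them if and only if it lies in both sets.'' The ``if'' direction of that claim is not justified and is false for general \pcs: the tree structure of $T^F$ (its C-nodes and the positions of subtrees relative to the fixed edges) can force a restricted edge to lie, in \emph{every} admissible order with the required fixed-edge rotation, inside an angle of the wrong color, even though matching-color angles exist at both nodes. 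Counters cannot detect this, so your checks do not constitute defining conditions of the intersection, and sufficiency needs an argument you do not give.

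The paper closes exactly this hole by invoking the invariant \cref{prop:colored-faces}: since $T^F$ is split off a tree representing $\Ord_\mathcal{H}(G_i^+)$, \emph{every} admissible order of $T^F$ already places each restricted edge in a correctly colored angle, and the restricted edges and their colors (and the fixed edges and their incident faces) coincide in both trees. Hence the color constraints of $S^F$ are automatically satisfied and need no checking at all; the only possible obstruction is the cyclic order of the fixed edges, which the paper tests by temporarily deleting all non-fixed leaves of $T^F$ and checking membership of the single fixed order of $S^F$ in the resulting tree. Note that this projection-plus-membership test is also cleaner than your proposed route of ``imposing'' the fixed order through \Update calls: \Update enforces consecutivity, not a prescribed cyclic order of leaves that may be interleaved with other leaves, so your structural step as described does not literally compute the right thing either. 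If you add the invariant-based argument (making your color check vacuous rather than load-bearing) and replace the \Update-based enforcement by projection onto the fixed leaves followed by an order-membership test, your proof becomes the paper's proof.
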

\begin{proof}
  In our implementation, we use that both trees originate from an instance of \peplan, or, more precisely, that the trees that $T^F$ and $S^F$ stem from satisfy \cref{prop:colored-faces} and \cref{asm:p-node}, respectively.
  On the one hand, this means that $T^F$ already ensures that all restricted edges are embedded in the right angles, i.e., faces.
  On the other hand, $S^F$ consists of a single P-node.
  Furthermore, all fixed edges of $S^F$ are also fixed in $T^F$ and have the same incident faces.
  Similarly, the restricted edges are the same in both trees and they also have the same colors.
  Thus, the only way to have an empty intersection is if the rotation of the fixed edges of~$S^F$ is not admissible by $T^F$.
  We can test this by temporarily removing all non-fixed leaves from $T^F$ and checking whether the fixed order of $S^F$ is admissible by the resulting tree. 
  This can easily be checked in linear time, e.g. using an approach similar to the intersection of ordinary PC-trees.
\end{proof}

Recall that as second modification to the general planarity test, we need to check whether the intersection with the constraints of $v_{i+1}$ fixes the flip of a C-node $\mu$ of $T_i$ that is incident to the edges in the set $X=E(\update{T_i}{F}, F)$.
More precisely, we check whether the intersection with $S^F$ fixes the split-off half~$\mu^F$ of $\mu$ in $T^F$ and we therefore also need to fix the other half $\mu'$ in~$T'$.
This can be checked in linear time by performing the test from \Cref{lem:lt-intersect} twice, once fixing $\mu^F$ to its one flip and once to its other flip.
We report an empty intersection if both tests fail, fix the flip of $\mu'$ accordingly if only one of the two test runs succeeds, and leave $\mu'$ unmodified otherwise.
Both~$\mu^F$ and~$\mu'$ can easily be identified as they are incident to the leaf~$\ell$ introduced by the split.

The last building block we need for our linear-time algorithm is a procedure to find the block nesting order we use in \Cref{sec:peplan-nonbic}.
Recall that if the added vertex $v_{i+1}$ is a cut-vertex, we need to take special care about its required nesting of incident blocks, as we need to process nested blocks before the blocks they are nested in.
Note that~$v_{i+1}$ is a cut-vertex if and only if there is at least one component that has no further half-edges except those leading to $v_{i+1}$, and each such component corresponds to a block around $v_{i+1}$.
In this case, the components with remaining half-edges together with the half-edges of $v_{i+1}$ leading to later vertices form an additional block $B_r$, as they are all connected via the unembedded but connected graph $G[V\setminus V_i]$.
We will ensure that this block comes last in our generated order.

\begin{lemma}\label{lem:lt-blocks}
  In time linear in the degree of vertex $v_{i+1}$
  we can find an order $C_1, \ldots, C_k$ of the blocks incident to $v_{i+1}$
  such that, whenever the constraints of $v_{i+1}$ require a block $C_a$ to be nested within a block $C_b$, we have $a<b$.
  Furthermore, $C_k$ contains all components with remaining half-edges together with the half-edges of~$v_{i+1}$ leading to later vertices.
\end{lemma}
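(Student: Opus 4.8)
The plan is to reduce \Cref{lem:lt-blocks} to computing a post-order traversal of a \emph{nesting forest} on the blocks incident to $v_{i+1}$, where this forest is read off from a single stack-based scan of the rotation at $v_{i+1}$. First I would label every edge incident to $v_{i+1}$ with the block it belongs to: an edge leading to a component $C_j$ with $F_j=L(\TCj)$ (a component all of whose half-edges lead to $v_{i+1}$, i.e.\ a block being closed off) is labelled with that block, whereas every edge leading to a component that still has remaining half-edges, as well as every edge leading to a later vertex, is labelled with the single distinguished block $B_r$. Since each incident edge already knows its destination component and whether that component still carries half-edges, this labelling takes $O(\deg(v_{i+1}))$ time.

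Next I would make the nesting relation explicit. A block $C_a$ must be nested within a block $C_b$ exactly when the constraint of $v_{i+1}$ forces $C_a$ to lie inside a face of $\mathcal H$ that is an interior angle of $C_b$ at $v_{i+1}$. This happens in one of two ways: either the fixed edges of $C_a$ appear in $\rho_v$ strictly between two fixed edges of $C_b$ that are consecutive among $C_b$'s edges around $v_{i+1}$ (nesting forced by the fixed order $\rho_v$), or $C_a$ has no fixed edges at $v_{i+1}$ and all of its restricted edges carry the color of such an interior angle of $C_b$ (nesting forced by the color-constraint). I would argue that in any positive instance the target embedding keeps the blocks of $G$ laminarly arranged around $v_{i+1}$ while refining $\rho_v$, so the arcs spanned by the fixed edges of distinct blocks in $\rho_v$ are nested or disjoint but never crossing; consequently the nesting requirements define a forest. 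If the fixed arcs instead cross, the instance is negative and a subsequent \Update detects this, so it suffices to treat the consistent (laminar) case. The block $B_r$, which carries the half-edges to the still-unembedded and therefore outermost part of the graph, is the unique outermost block and is taken as the root.

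The forest is then recovered by one scan around $v_{i+1}$ maintaining a stack of currently open blocks, initialised with $B_r$: walking along $\rho_v$ I push a block when its fixed edges first appear strictly inside the current top block and pop it when the enclosing block's fixed edges resume, recording the popped block's parent as the block below it on the stack. While scanning I simultaneously record, for each interior angle (and hence each face of $\mathcal H$ incident to $v_{i+1}$), the block on top of the stack as its owner; a restricted-only block is then attached as a leaf under the owner of the face matching the common color of its restricted edges, which I look up in $O(1)$ time using the per-node color counters and face pointers from \Cref{sec:linear-time}. Each edge is touched a constant number of times and each block is pushed and popped once, so the scan runs in $O(\deg(v_{i+1}))$ time. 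Outputting the blocks in post-order of this forest (each block after all of its descendants) yields the sequence $C_1,\dots,C_k$: whenever $C_a$ must be nested in $C_b$, block $C_a$ is a descendant of $C_b$ and hence precedes it, so $a<b$; and since $B_r$ is the root it is emitted last as $C_k$, which by construction contains exactly the components with remaining half-edges together with the half-edges of $v_{i+1}$ leading to later vertices.

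I expect the main obstacle to be the argument of the second step: showing that the two independent sources of nesting --- the laminar interleaving of fixed edges in $\rho_v$ and the face-colors of the restricted edges --- are mutually consistent and jointly laminar, so that a single forest, and thus a valid order, exists for positive instances. The cyclic (rather than linear) nature of the rotation also requires care in deciding where to ``cut'' the cycle, which I resolve by rooting the scan at $B_r$ and declaring every face not interior to any block to be owned by $B_r$. Once laminarity is established, correctness of the post-order and the linear running time follow routinely.
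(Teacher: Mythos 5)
Your overall strategy (a stack-based scan of the fixed-edge order $\rho_{v_{i+1}}$, using non-alternation of blocks in yes-instances, and forcing $B_r$ to come out last) is essentially the paper's, but there is a genuine gap exactly at the point you yourself flag as delicate: where to cut the cyclic order. You resolve it by ``rooting the scan at $B_r$'', which is only meaningful when $B_r$ contributes a fixed edge at $v_{i+1}$. When it does not, your fallback --- declaring every face not interior to any block to be owned by $B_r$ --- is circular: on a cyclic order, which angles are ``interior'' to which block is determined only after one fixes where the outer face (i.e., $B_r$) lies, and that is precisely the cut you are trying to choose. A wrong cut silently produces a wrong order even on yes-instances. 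Concretely, suppose $\rho_{v_{i+1}} = (b_1, x_1, x_2, b_2)$ with $b_1,b_2$ in block $C_b$ and $x_1,x_2$ in block $C_x$, where the constraints force $C_x$ to be nested inside $C_b$ (so $B_r$ must lie in the angle between $b_2$ and $b_1$). Cutting between $x_1$ and $x_2$ yields the linear order $x_2,b_2,b_1,x_1$; the stack scan sees proper linear nesting, detects no alternation, and emits $C_b$ before $C_x$, violating the required order. The essential content of the paper's proof is exactly the case analysis you are missing: if $B_r$ has a fixed edge, start the scan right after it; if $B_r$ has no fixed but a restricted edge of color $c$, start at a fixed edge following a $c$-colored angle (a position where $B_r$ can legally lie, hence a valid outer position, which moreover guarantees a $c$-colored angle is still available when $B_r$ is processed last); if $B_r$ has only unrestricted edges, start anywhere.

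A secondary remark: your handling of restricted-only blocks (attaching each as a forest leaf under ``the owner of the face matching its color'') is more machinery than needed and is itself slightly delicate, since several angles owned by different blocks may carry the same color, so there is no canonical owner to attach to. The paper sidesteps this entirely: a block without fixed edges can never be the outer side of a required nesting, because only fixed edges create angles that can trap another block, so all such blocks can simply be emitted first, before the stack-based scan of the blocks with fixed edges. This removes any need for a by-color lookup of face owners and avoids the ambiguity altogether.
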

\begin{proof}
  We will put all blocks without fixed edges first (except for $B_r$), as these cannot force other blocks to be nested within them.
  Note that in a yes-instance, a block can only contain restricted edges with different colors if it also contains fixed edges with appropriate incident angles, as the edges could otherwise not be made consecutive.
  It remains to generate a subsequent order of blocks with fixed edges, where the prescribed order of fixed edges $\rho_{v_{i+1}}$ may force blocks to be nested.
  To do so, we will process the fixed edges in the order of $\rho_{v_{i+1}}$ and keep a stack of blocks for which we have seen some, but not all fixed edges.
  When encountering the last edge from a block, we remove the block and append it to the processing order of blocks.
  For example, this yields the block order shown in \Cref{fig:pe-blocks} when processing starts at the block with number 4.
  Note that in a yes-instance, two different blocks may not alternate and we can report a negative instance when we encounter a fixed edge of a block that is within, but not at the top of the stack.

  It remains to ensure that the block $B_r$ with the half-edges of $v_{i+1}$ can be put last in the order, which we do by appropriately choosing the edge from which we start the processing of fixed edges in their cyclic order $\rho_{v_{i+1}}$.
  If $B_r$ contains a fixed edge incident to $v_{i+1}$, we start processing the cyclic order $\rho_{v_{i+1}}$ with the fixed edge following thereafter.
  This already ensures that $B_r$ is the last block returned from the stack-based algorithm.
  If $B_r$ contains no fixed but a restricted edge~$e$ incident to $v_{i+1}$, we start processing from an arbitrary fixed edge following an angle with the same color as $e$ (which holds for block 4 and the topmost edge of block 8 in \Cref{fig:pe-blocks}).
  We then append $B_r$ and note that the choice of first fixed edge ensures that the we still have an appropriately-colored angle available when processing $B_r$ as last block.
  Lastly, if~$B_r$ has only unrestricted edges incident to $v_{i+1}$, we can start the processing of fixed edges at any point.
\end{proof}

Altogether, this now allows us to implement our full algorithm in linear~time.
\thmMain*\label{thm:main*}
\begin{proof}
  Our linear-time algorithm for testing \peplan on general instances works in the following steps.
  As preprocessing, we first apply two linear-time algorithms of Angelini et al.\ that allow us to process each connected component of $G$ separately and that compute the color-constraints $f(e)$ for each edge of an $H$-bridge \cite[Theorem 4.14 and Lemma 2.2]{abf-tpo-15}.
  We then run the algorithm described in \Cref{sec:peplan-nonbic-details} (whose correctness we established in \Cref{lem:peplan-nonbic}) on each connected component of~$G$ using the amortized linear-time \Update from \Cref{lem:lt-update}.
  Haeupler and Tarjan \cite[Section 4]{ht-pav-08} describe how the DFS can be implemented to at the same time yield the separate blocks incident to a vertex $v_{i+1}$ together with the edges connecting them to $v_{i+1}$.
  The sorting of the incident blocks in line \ref{line:blockOrder} of \Cref{alg:peplan-nonbic} is done using the approach from \Cref{lem:lt-blocks}.
  In line \ref{line:intersect2}, we use the intersection check from \Cref{lem:lt-intersect} and afterwards, if necessary, fix the respective C-node of the tree $T'$ representing the added block.
  Note that this is not needed if the current block has no further half-edges, as we can simply remove the respective subtree in this case.
  Further note that the DFS also ensures that the leaf $\ell$ at which we merge in line \ref{line:merge} is the root of $T'$, which allows it to be attached to $S'$ in constant time, see \cite{ht-pav-08}.
  This concludes the linear-time implementation of our algorithm.
\end{proof}
 }{}

\section{Conclusion}\label{sec:conclusion}
In this paper, we gave a linear-time solution for the problem \peplan.
Our algorithm straightforwardly extends the well-known vertex-addition planarity tests of Booth and Lueker~\cite{boo-pta-75,bl-tft-76} and Haeupler and Tarjan~\cite{ht-pav-08}.
The core of our approach is the modification of the underlying data structure, the PC-tree, to also respect the constraints that stem from the partial drawing.
These constraints are derived from the insight that for an extension of the fixed partial drawing, it is necessary and sufficient to respect the fixed vertex rotations as well as to embed certain edges in predefined faces between fixed edges.
These constraints naturally translate from rotations of individual vertices to PC-trees that represent all planar embeddings of connected components.
Switching in these \pcs in the planarity test then requires us to handle two very specific situations more carefully than in the unconstrained setting.
We show that both can be easily resolved in our setting, thereby also uncovering assumptions the ordinary planarity test makes and providing new insights into the workings of the generic test and its~embedder.

In comparison to the previous decomposition-based approach by Angelini et al.\ \cite{abf-tpo-15}, this yields a strongly simplified algorithm that relies on one depth-first search together with a single non-trivial data structure.
The description of our core algorithm is less than half as long as the description of the algorithm by Angelini et al.\ \cite{abf-tpo-15}, while at the same time being far less technical\footnote{Note that the exposition of the core points of our algorithm is contained up to \Cref{sec:summary} on page \pageref{sec:summary}.
    In addition to the full proofs following said section, this work includes three pages of supplementary pseudo-code and an extensive background on PC-trees in \Cref{sec:pctree-details} to further improve accesiblity, while still being self-contained without these parts.}.
Moreover, our algorithm allows for a concise formulation in pseudo-code and thereby can be grasped in its entirety, which we see as our main contribution towards simplicity.
Note that the basic planarity test and the PC-tree as its underlying data structure already have practical implementations \cite{fpr-eco-21,bcpdb-smy-04,fin-cpa-24}.
Combining this with our detailed, implementation-level pseudo-code for the changes required in the partial case, yields a practical solution to \peplan.

An interesting question for future work is whether our algorithm can be extended to handle constraints to vertex rotations in the form of arbitrary \pcs that in particular were not derived from an instance of \peplan.
This would represent a natural generalization of the problem \textsc{(Partially) PQ-Constrained Planarity}\cite{br-spo-15,gkm-pta-08}, where ordinary PC-trees constrain the orders of (some of) the edges incident to a vertex.
The main complication here is that the constraints on an edge now no longer need to be the same at both its endpoints.

\appendix
\bibliography{bibliography}

\end{document}